\begin{document}
	\title{BPPChecker: An SMT-based Model Checker on Basic Parallel Processes(Full Version)}
	%
	%
	\author{Ying Zhao\inst{1} \and
		Jinhao Tan\inst{2} \and
		Guoqiang Li\inst{1}
	}
	\authorrunning{Y. Zhao et al.}
	%
	\titlerunning{BPPChecker: An SMT-based Model Checker on Basic Parallel Processes}
	\institute{Shanghai Jiao Tong University, Shanghai, China \\
		\email{\{zhaoying98, li.g\}@sjtu.edu.cn} \\
		\and
		The University of Hong Kong, Hong Kong, China \\
		\email{tjh19962013@163.com}}
	\maketitle  
	\begin{abstract}
		Program verification on concurrent programs is a big challenge due to general undecidable results. Petri nets and its extensions are used in most works. However, existing verifiers based on Petri nets are difficult to be complete and efficient. Basic Parallel Process (BPP), as a subclass of Petri nets, can be used as a model for describing and verifying concurrent programs with lower complexity.
		We propose and implement BPPChecker, the first model checker for verifying a subclass of CTL on BPP. We propose constraint-based algorithms for the problem of model checking on BPPs and handle formulas by SMT solver Z3. For EF operator, we reduce the model checking of EF-formulas to the satisfiability problem of existential Presburger formula. For EG operator, we provide a $k$-step bounded semantics and reduce the model checking of EG-formulas to the satisfiability problem of linear integer arithmetic. Besides, we give Actor Communicating System (ACS) the over-approximation BPP-based semantics and evaluate BPPChecker on ACSs generated from real Erlang programs. Experimental results show that BPPChecker performs more efficiently than the existing tools for a series of branching-time property verification problems of Erlang programs.

		\keywords{Basic Parallel Processes \and model checking \and computation tree logic }
		
	\end{abstract}

	\section{Introduction}
	Program verification on concurrent programs is a big challenge due to general undecidable results~\cite{ramalingam2000,107}. Most works are on Petri nets~\cite{petri1966communication} and its extensions like multiset pushdown system~\cite{21}, Petri Nets with Unordered Data~\cite{22} and Nets with Nested Colored Tokens (NNCT)~\cite{24}. It was proved that reachability on Petri nets has an ACKERMANN upper bound~\cite{25} and a Tower-hard lower bound~\cite{26}. Czerwińsk further improved the lower bound by increasing the height of tower from linear to exponential and proved that without restriction on dimension, the problem needs a tower of exponentials of time or space, of height exponential in input size~\cite{czerwinski2021improved}. For coverability and boundedness on Petri nets, the complexity is EXPSPACE-complete~\cite{28,9,atig2009context}. So existing automatic tools such as BFC~\cite{10} and Petrinizer~\cite{19} cannot perform well for large-scale program verification and the tools are difficult to be complete and efficient.
	
	Basic Parallel Process (BPP) is an important subclass of Petri net~\cite{13} which still holds some concurrent properties. On BPP, coverability and reachability are NP-complete which can be efficiently handled by SAT/SMT solvers~\cite{29,121,SMT}. Although liveness (\textbf{EG}) is still undecidable on BPP, the hierarchical structure of BPP allows us to perform a bounded model checking on liveness. By using BPP, we can greatly reduce the complexity of model checking in theory, and implement efficient algorithms and practical tools to verify asynchronous communicating programs~\cite{18}. 

	We propose and implement BPPChecker, the first model checker for verifying a subclass of CTL (CTL$_\lnot U$) on BPP. For the NP-complete reachability (EF-formula), inspired by the reduction algorithm proposed by Verma~\cite{13}, we reduce it to the satisfiability of existential Presburger formulas. For the undecidable liveness (EG-formula), we propose a $k$-step bounded semantics of EG-formula on BPP and reduce the bounded liveness problem to satisfiability of linear integer arithmetic formulas. The linear integer arithmetic formulas generated by our algorithms are handled by SMT solver Z3. 
	
	We give Actor Communicating System (ACS) the over-approximation BPP-based semantics and evaluate BPPChecker on ACSs generated from real Erlang programs~\cite{erlang}. ACS is a sound model of Erlang program that generated by the Erlang verifier Soter~\cite{1}. By means of Soter, we can easily transfer Erlang programs to ACS and then perform model checking ACS with support of ACS2BPP module in our tool. Experimental results show that BPPChecker has advantages in speed and the number of constraints generated and it has more efficient results than existing tools in a series of reachability and bounded liveness property verification.
	
	\subsubsection{Contributions.} To summarize, this paper makes the following contributions:
	\begin{itemize}
		\item A BPPChecker that supports model checking EG-formulas and EF-formulas on BPP;
		\item The support of ACS2BPP module in BPPChecker that can help us perform model checking on Erlang programs.
	\end{itemize}
	
	The remainder of this paper is structured as follows. Section 2 provides the necessary preliminaries. Section 3 describes our model checking algorithms of EF-formulas and EG-formulas on BPP. Section 4 proposes our reduction from ACS to BPP. Section 5 gives an experimental evaluation of BPPChecker and an analysis of experimental results. Section 6 describes related work. Section 7 concludes the paper and discusses our future work.
	
	\section{Preliminaries}
	
	\subsection{Basic Parallel Process}
	Basic Parallel Process(BPP) is a model for asynchronous concurrent systems, which can be regarded as a subclass of Petri nets~\cite{christensen1993decidability}. It models the processes of a concurrent system as a symbol and models a state as a concurrent combination of multiple symbols. A symbol can produce more symbols through transition, so a concurrent process usually produces an infinite state system. 
	
	A BPP expression contains action prefixes, choice and merge operations, respectively representing process transition, choice of process transition and combination of concurrent processes. The semantics of transition in BPP are asynchronous. According to~\cite{7}, BPP is regarded as a commutative context-free grammar, which will be adopted in this paper. 
	
	Let $ Var=\{X,Y,Z,...\} $ be the set of variables and $ Act = \{a,b,c,...\} $ be the set of actions.
	
	\begin{definition} A BPP is a 2-tuple $ (V, \Delta) $, where $ V \in Var $ is a finite set of symbols and $ \Delta $ is a finite set of rules. Rules of $ \Delta $ is written as $ X \rightarrow \alpha $, where $ X \in V, a \in Act, \alpha \in V^{\oplus} $. $ V^{\oplus} $ represents free commutative monoid generated by $ V $. A BPP$(V, \Delta)$ determines a labelled transition system$ (V^{\oplus}, Act, \rightarrow, \alpha) $, where $ V^{\oplus} $ is the state space, $ \rightarrow $ is the transition relation generated by the following rule:
		\begin{equation}
			\frac{X \stackrel{a}{\longrightarrow} \alpha \in \Delta}
			{\beta X \gamma \stackrel{a}{\longrightarrow} \beta\alpha\gamma},
		\end{equation}
		where the BPP expression $ \alpha, \beta, \gamma \in V^{\oplus}$.
	\end{definition}
	
	We denote a reflexive transitive closure of single-step transition relation $ \{\stackrel{a}{\longrightarrow}\}_{a \in Act}$ by  $ \rightarrow^\star $. For instance, $ \alpha \rightarrow^\star \beta $ denotes that state $ \beta $ is reachable from state $ \alpha $ after several transitions. 
	
	A BPP expression has modulo commutativity, which means $XYZ$, $ YXZ $, $ ZXY $ are treated as the same element. So a BPP expression is intuitively a concurrent combination of symbols, each of which can transfer independently according to its own rules. 
	
	\begin{example}
		\label{bpp}
		Given a BPP$ (V, \Delta) $, where symbol set $ V = \{X_1, X_2, X_3\} $ and rule set $ \Delta $ consists of rules 
		$ r_1: X_1 \stackrel{a}{\longrightarrow} X_2X_3$,
		$ r_2: X_2 \stackrel{a}{\longrightarrow} X_1X_2$,
		$ r_3: X_3 \stackrel{a}{\longrightarrow} X_1$. Start from the BPP expression $ X_1 $, the state transition is shown in Figure~\ref{fig:bpp}.
	\end{example}

	\begin{figure}[htbp]
		\centering
		\includegraphics[width=0.5\textwidth]{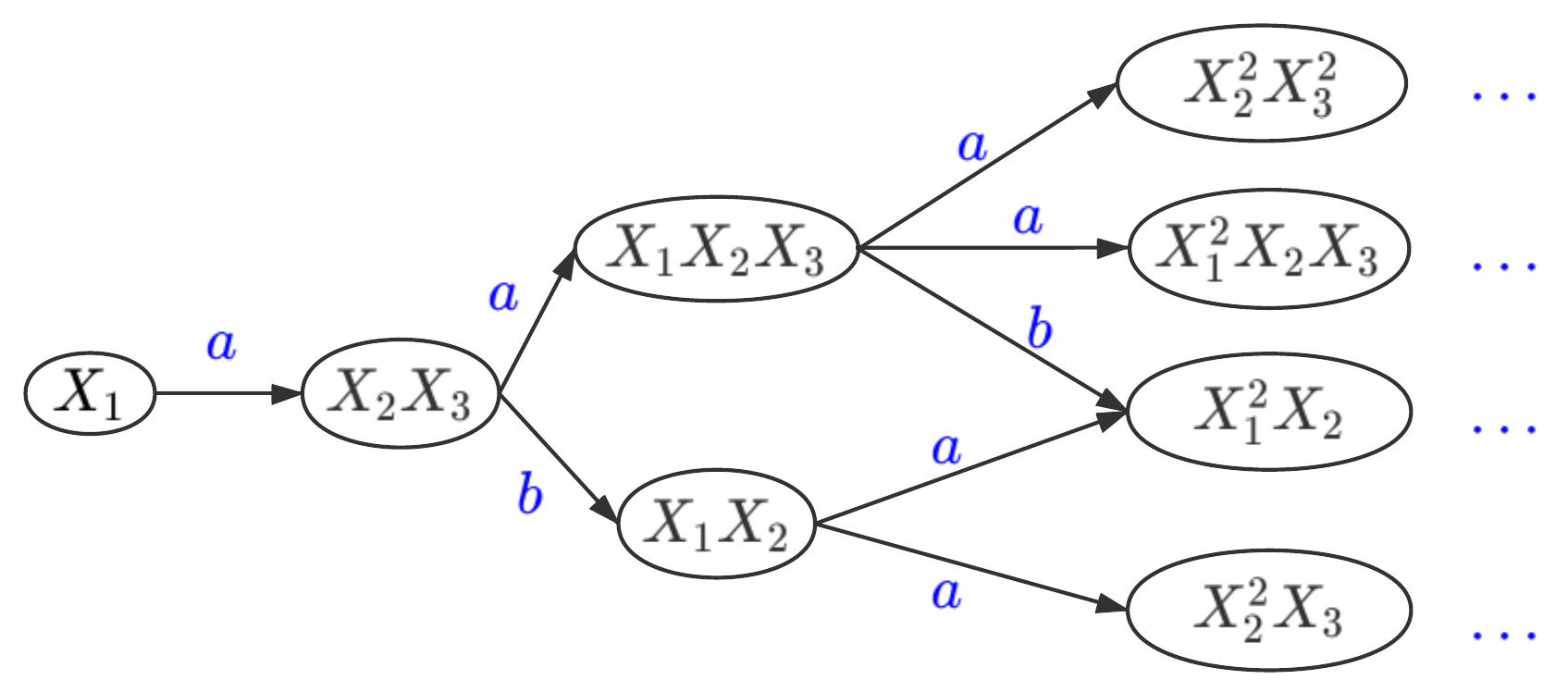}
		\caption{The transition of BPP in Example~\ref{bpp}}		\label{fig:bpp}
	\end{figure}

	\subsection{CTL}
Computation Tree Logic (CTL) is a branching-time logic, which means the model of time is a tree-like structure, in the future there're different paths. A CTL formula uses logical operators and temporal operators, the logical operators are the usual ones: $\wedge,\vee,\lnot$; the temporal operators include quantifiers over paths $A,\ E$ which means all paths and exist one path and path-specific quantifiers:$X,G,U,F$. $X$ means 'Next', $G$ means 'Globally', $U$ means 'Until' and $F$ means 'Finally'. One minimal set of operators is: \{true, $ \vee, \lnot, \textbf{EG}, \textbf{EU}, \textbf{EX} $\}.
In this paper we focus on the fragment of CTL called $\mathrm{CTL}_{\lnot U}$ with the following syntax

\begin{equation}
	\varphi ::= \lambda | \lnot\varphi  | \varphi \wedge \varphi |\varphi \vee \varphi| \varphi \Rightarrow \varphi|\textbf{E}	\left \langle a \right \rangle \varphi | \textbf{EG} \varphi | \textbf{EF} \varphi | \textbf{EX} \varphi,
\end{equation}
where $ a\in Act $. $ \lambda $ is an atomic fomrula of the form as $ \textbf{am}^T \ge \textbf{b} $, where $ \textbf{m} = (X_1,...,X_n) $, $a=(a_1,...,a_n) \in \mathbb{N}^n, b \in \mathbb{N} $.
\begin{example} 
	The the duality \textbf{AF} operator can be defined as: $ \textbf{AF}\varphi \stackrel{\mathrm{def}}{=} \lnot \textbf{EG}(\lnot \varphi) $. 
	The formula $ \textbf{AF}(X+Y \ge 3 \rightarrow \textbf{E}	\left \langle a \right \rangle(Z \ge 1)) $ denotes that " If the sum of $ X $ and $ Y $ in a state is at least 3, then this state exists a subsequent state passing action $ a $ whose number of the process $ Z $ is greater than or equal to 1" will be satisfied at some point in the future. 
\end{example}

If a $\mathrm{CTL}_{\lnot U}$ formula contains no \textbf{EF} operator we call it an \textbf{EG}-formula; if it contains no \textbf{EG} operator we call it an \textbf{EF}-formula. In terms of the complexity of model checking problem for \textbf{EG}-formulas(i.e. liveness problem) on BPPs, Esparza~\cite{29} proved the following theorem:
\begin{theorem}
	\label{theorem_EG}
	The model checking problem for \textbf{EG}-formulas on BPPs is undecidable.
\end{theorem}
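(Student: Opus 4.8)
The plan is to prove undecidability by reduction from the halting problem for deterministic two-counter (Minsky) machines; equivalently, we use that it is undecidable whether such a machine, started from the empty configuration, has an infinite computation. Given a two-counter machine $M$ I would build a BPP $B_M$, an initial expression $\alpha_M \in V^{\oplus}$, and an \textbf{EG}-formula $\varphi_M$ (lying in the \textbf{EG}-fragment, i.e.\ using only atomic formulas, Boolean connectives, $\textbf{E}\langle a\rangle$, $\textbf{EX}$ and $\textbf{EG}$) such that $\alpha_M \models \textbf{EG}\,\varphi_M$ in the transition system of $B_M$ iff $M$ does not halt. Undecidability of the latter then yields undecidability of model checking \textbf{EG}-formulas on BPPs.

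First I would make $B_M$ \emph{weakly} simulate $M$: a distinct symbol for each control location of $M$, the invariant that exactly one ``control token'' is present, and the value of counter $c_i$ encoded as the multiplicity of a symbol $C_i$. An increment is a rule $q\rightarrow q'C_i$. The delicate instructions are decrements and zero-tests, because a BPP rule is communication-free — it consumes a single symbol — so the control token can neither inspect nor modify a $C_i$ token in one step; a faithful simulation by the transitions alone would in fact contradict the decidability (indeed NP-completeness) of reachability on BPPs. So I would let the control token \emph{guess}, at a branch instruction, whether the relevant counter is zero, let the actual consumption of a $C_i$ token happen through a separate move of that token, and then have $\varphi_M$ certify that every guess and every decrement was correct. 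Concretely $\varphi_M$ is a finite conjunction of local clauses, one per location, reading ``if the control token is at $q$ then the move currently being taken is the one prescribed by $M$'s instruction at $q$ and its effect on the counters is realised''; here ``control at $q$'' is the atom $q\ge 1$, ``$c_i=0$'' at a zero-branch checkpoint is $\lnot(C_i\ge 1)$ (counter values being natural numbers), and the one-step look-ahead needed to examine the successor configuration and to force or forbid specific moves is supplied by $\textbf{E}\langle a\rangle$ and $\textbf{EX}$. With this in place, an infinite run of $M$ yields an infinite path of $B_M$ all of whose configurations satisfy $\varphi_M$, and conversely any such infinite path is a faithful simulation and hence exhibits an infinite run of $M$; while if $M$ halts then every faithful prefix is finite, so every infinite path from $\alpha_M$ must eventually take a cheating move that falsifies some clause of $\varphi_M$, and $\alpha_M\not\models\textbf{EG}\,\varphi_M$.

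The main obstacle is engineering $B_M$ and $\varphi_M$ so that the BPP behaviours that are \emph{not} faithful runs of $M$ are \emph{exactly} those along which $\varphi_M$ eventually fails. Spuriously-succeeding zero-tests are easy to exclude ($\lnot(C_i\ge 1)$ catches them), but decrements are the crux: since the control token and a $C_i$ token can never fire jointly, one must interleave a ``decrement request'' by the control token with a consuming move of some $C_i$ token and then use the branching operators to pin down that \emph{exactly one} such consuming move occurs for each decrement instruction — ruling out a missing decrement, an over-decrement, and a decrement fired at the wrong time — since otherwise the construction would merely simulate a lossy counter machine, whose halting problem is decidable. Designing this interleaving protocol together with the matching clauses of $\varphi_M$, and verifying the resulting equivalence in both directions, is where essentially all the work lies.
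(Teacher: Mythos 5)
The paper does not actually prove this theorem; it quotes it from Esparza's cited work, and your overall strategy --- weakly simulate a two-counter machine by a BPP and let the \textbf{EG}-formula police the simulation --- is indeed the strategy behind that known proof. But as written your text is a plan rather than a proof, and the part you defer is precisely the part that carries the entire argument. You correctly observe that a communication-free rule cannot synchronise the control token with a counter token, that the naive encoding therefore only yields a lossy counter machine, and that one must design an interleaving protocol plus formula clauses forcing \emph{exactly one} consuming move per decrement; you then state that designing this protocol and verifying both directions ``is where essentially all the work lies'' and stop. Until that protocol is exhibited and the two directions of the equivalence are checked against it, undecidability has not been established. The danger is not hypothetical: as you yourself note, a faithful simulation cannot come from the transition rules alone without contradicting Theorem~\ref{theorem_EF}, so the construction genuinely lives or dies on these omitted details.

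There is also a technical misstep in the tool you propose for the policing. $\textbf{E}\langle a\rangle\psi$ and $\textbf{EX}\psi$ are existential branching modalities evaluated at a state: they assert that \emph{some} successor satisfies $\psi$, and their negations assert that \emph{no} successor does. Neither can ``force or forbid specific moves'' along the particular infinite path witnessing $\textbf{EG}\,\varphi_M$, because $\textbf{EG}$ only requires $\varphi_M$ to hold at every state of that path and says nothing about which successor the path selects next. The standard repair is to make every piece of information the formula needs visible in the state itself --- intermediate control variables recording ``a decrement of $c_i$ has been requested / acknowledged / committed'' --- so that each clause of $\varphi_M$ becomes a pure state predicate such as $R_q\ge 1 \Rightarrow (D_i\ge 1 \wedge \lnot(D_i\ge 2))$, while $\lnot\textbf{E}\langle d_i\rangle \mathrm{true}$ is used only as an emptiness test for counter $i$ at a zero-branch. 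Spelling out these rules and clauses, and proving that every infinite $\varphi_M$-invariant path decodes to an infinite run of $M$ (and conversely), is what is still missing from your argument.
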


\textbf{EF}-formulas can define the reachability properties~\cite{berard2001reachability}. For the complexity of model checking \textbf{EF} (i.e. reachability problem) on BPPs, Esparza~\cite{28} proved the following theorem:

\begin{theorem}
	\label{theorem_EF}
	The model checking problem for \textbf{EF}-formulas on BPPs is NP-complete.
\end{theorem}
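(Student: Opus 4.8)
The plan is to prove the two directions of NP-completeness separately. For \textbf{membership in NP}, I would show that the question ``does the initial marking $m_0$ of a BPP $(V,\Delta)$ satisfy a given \textbf{EF}-formula $\varphi$?'' reduces, in polynomial time, to the satisfiability of an \emph{existential Presburger} formula of size polynomial in $|\varphi|$ and $|\Delta|$; since satisfiability of existential Presburger formulas is in NP, this gives the upper bound. The target formula is built bottom-up: for every subformula $\psi$ I construct $\Phi_\psi(\bar z)$ over a tuple $\bar z=(z_X)_{X\in V}$ of variables ranging over markings, maintaining the invariant $m\models\psi \iff \Phi_\psi(m)$. An atomic formula $\textbf{am}^{T}\ge\textbf{b}$ becomes the linear constraint directly; $\psi_1\wedge\psi_2$, $\psi_1\vee\psi_2$ and $\psi_1\Rightarrow\psi_2$ ($\equiv\lnot\psi_1\vee\psi_2$) become the corresponding propositional combinations, and I would push every negation down towards the atoms, where $\lnot(\textbf{am}^{T}\ge\textbf{b})$ is again a linear (in)equality, so that the construction stays within the existential fragment.

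\textbf{The modal cases} bring in the BPP dynamics. Writing $\delta_r$ for the net-effect vector of a rule $r:X\stackrel{a}{\longrightarrow}\alpha$ (the multiplicities of $\alpha$ minus $e_X$), I would set $\Phi_{\textbf{E}\langle a\rangle\psi}(\bar z)\equiv\exists\bar z'.\;\Phi_\psi(\bar z')\wedge\mathit{step}_a(\bar z,\bar z')$, where $\mathit{step}_a$ is the disjunction over the $a$-labelled rules $r$ of $(z_X\ge 1\wedge\bar z'=\bar z+\delta_r)$ --- a quantifier-free linear formula of linear size; $\textbf{EX}$ is the same with the disjunction over all rules. For $\textbf{EF}$ I would set $\Phi_{\textbf{EF}\psi}(\bar z)\equiv\exists\bar z'.\;\Phi_\psi(\bar z')\wedge\mathit{reach}(\bar z,\bar z')$, the heart of the argument being a lemma in the style of Verma's encoding of Parikh images of context-free grammars~\cite{13}: the reachability relation $\mathit{reach}(\bar z,\bar z'):=\bar z\rightarrow^\star\bar z'$ of a BPP is definable by an existential Presburger formula of size polynomial in $|\Delta|$. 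Its construction introduces one nonnegative counter $u_r$ per rule, the linear ``flow'' equation $\bar z'=\bar z+\sum_r u_r\,\delta_r$ for the cumulative net effect, and a connectivity/support side condition ensuring that the multiset of rules actually used can be scheduled into a legal derivation (every symbol consumed by a used rule is present in $\bar z$ or produced earlier) --- essentially the standard reachability condition inside the rule graph of the grammar.

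Assembling $\Phi_\varphi$ and substituting the constant $m_0$, we get that $\varphi$ holds at $m_0$ iff the existential Presburger sentence $\Phi_\varphi(m_0)$ is true, which is decidable in NP by guessing values of polynomially bounded bit-length for the existential variables and checking the linear constraints. For \textbf{NP-hardness}, I would reduce from $3$-SAT (equivalently, from feasibility of a system $A\bar x=\bar b,\ \bar x\ge 0$): from a start symbol $S$, a rule $S\to A_1\cdots A_n$ together with choices $A_i\to T_i$ and $A_i\to F_i$ lets the system reach exactly the markings recording a truth assignment, and $\textbf{EF}$ of the conjunction of the linear constraints ``each variable resolved'', ``$z_S=0$'' and, for each clause, ``some satisfying token present'' holds at $S$ iff the formula is satisfiable; alternatively one may invoke the known NP-hardness of marking reachability in BPP, which is directly expressible as an \textbf{EF}-formula.

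\textbf{Main obstacle.} The delicate step is the reachability lemma --- getting the support/connectivity side condition exactly right while keeping $\mathit{reach}$ of polynomial size, so that nested \textbf{EF}'s do not blow the formula up; this is precisely the content of the Verma-style construction and where most of the technical effort sits. A second point that needs care is negation: the clean translation above works only as long as each $\lnot$ can be moved to an atom, but $\lnot\textbf{EF}$, $\lnot\textbf{EX}$ and $\lnot\textbf{E}\langle a\rangle$ have no dual inside this fragment, so one must either restrict to \textbf{EF}-formulas whose negations reach the atoms, or argue that the universal blocks thereby introduced still keep the decision problem inside NP --- clarifying which is the case is the remaining obstacle to a fully general statement.
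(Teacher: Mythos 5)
The paper never proves this theorem itself --- it imports it from Esparza --- but the membership argument you sketch is essentially the construction the paper implements in Section 3.1: per-symbol occurrence counters, per-rule usage counters tied together by a flow equation, and distance variables enforcing the connectivity side condition, i.e.\ the Verma--Barner--Hague existential Presburger encoding of BPP reachability. Your 3-SAT hardness reduction (which the paper omits entirely) is the standard one and works. The one point you must not leave as a loose end is the negation issue you flag in your last paragraph: it is not merely an obstacle to your proof but to the statement as literally given. The paper's definition of an \textbf{EF}-formula allows $\lnot$ to scope over $\textbf{EF}$, and then the translation leaves the existential fragment; each alternation of $\lnot\textbf{EF}$ climbs one level of the polynomial hierarchy (model checking is $\Sigma_d^p$-complete for modal nesting depth $d$), and the full EF-logic on BPP is PSPACE-complete --- a fact the paper itself records in its related-work section when citing Mayr. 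So NP-completeness holds only for the fragment in which every negation can be pushed down to the atomic linear constraints (equivalently, no $\lnot$ scopes over a modal operator), which is also the only case Section 3.1 of the paper actually handles. Your proof is complete once you state that restriction explicitly; without it, membership in NP cannot be established unless the polynomial hierarchy collapses.
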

	
	\subsection{Actor Communicating System}
	Actor Communicating System (ACS) is a sound model for asynchronously communicating programs proposed by Osualdo~\cite{1}. ACS contains behaviors of creating processes, sending messages and receiving messages. The formal definition of ACS is defined as follows.
	
	\begin{definition}  An ACS is a 4-tuple $(Q,P,M,R)$, where $ Q $ is a finite set of control states, $ P $ is a finite set of process states, $ M $ is a finite set of messages and $ R \subseteq Q \times Operations \times Q$ is a finite set of transition rules. For $ r \in R $, $ r $ can be written as $ q_1 \stackrel{op}{\longrightarrow} q_2$, where $ q_1, q_2 \in Q $ and $ op \in \{nop, \nu q_0$, p!m (send message $ m $ to process $ p $), $p?m$ (receive message $ m $ from process $ p $)\}. 
	\end{definition}
	We define $ \Rightarrow $ that denotes $ \bigcup_{r\in R} \stackrel{r}{\Rightarrow} $, and $ \Rightarrow^\star $ that denotes the reflexive transitive closure of $ \Rightarrow $.
	Semantically, ACS does not remember the order of messages in a mailbox, instead it uses a counter abstraction on the mailbox,
	recording the number of messages in the mailbox. ACS also use a second counter on the control state of each process class to count the process in the current control state. So we further use a VAS (short for Vector Addition System~\cite{8}) to express the semantics of ACS intuitively. 
	
	Thus The semantics of ACS can be defined as a transition system generated by VAS $ \Upsilon (I, R)$, where the set of places $ I = Q \cup (P \times M) $. We denote the vector set $ (u,v) \in \mathbb{N}^I$ as a place of $ \Upsilon $, where vector $ u=(q_1, \cdots, q_{|Q|}) $ contains counters of all states and vector $ u=((p_1, m_1), (p_1, m_2),\cdots) $ contains counters of messages in mailbox. Finally, we introduce two self-defined symbols: $ u[q_1] $ denotes the number of state $ q_1 $ and symbol $ v[(p_1, m_1)] $ denotes the count of $ m_1 $ in $ p_1 $'s mailbox.

	\section{Model Checking EF-formulas and EG-formulas}
	\label{sec3}
	This section provides our reduction of model checking \textbf{EF}-formulas and \textbf{EG}-formulas on BPP. First, we reduce the model checking problem of \textbf{EF}-formula to the satisfiability of existential Presburger formulas~\cite{stansifer1984presburger}. Second, we reduce the bounded model checking problem of \textbf{EG}-formula to the satisfiability of linear integer arithmetic (LIA) in $ k $ steps.
	
	\subsection{EF-formulas: Construction of Existential Presburger Formula}
	\label{reachsec}
	The existing idea of NP-complete model checking is to reduce the  problem to smaller computing objects and solve it by efficient SMT solvers. Verma et al. proved that the Parikh image of context-free grammar could be reduced to the satisfiability of existential Presburger formula~\cite{13}. However, there is a problem with this reduction, which makes the formula constructed too few constraints. Barner proposed a new version and corrected the problem~\cite{14}. We implements the reduction method equivalent to Barner's version logic proposed by Hague~\cite{15}.
	
	Given a BPP expression $ (V,\Delta)$ and an initial BPP expression $ \alpha $. We assume that $\alpha$ contains only one symbol, and let $\alpha = P_{init} \in V$. We introduce variables:
	\begin{itemize}
		\item $\forall P \in V $, introduce $ x_p $ representing the number of times that process $ P $ appears;
		\item $\forall r \in \Delta $, introduce $ y_r $ representing the number of times that rule $ r $ is used;
		\item According to the transition rule, define a spanning tree whose nodes are process symbols and introduce $ z_p $ representing the distance from $ P $ to $ P_{init} $.
	\end{itemize}
	Now we can construct the existential Presburger formula, which contains two parts of constraints. The first part mainly expresses that the number of times a transition rule is used must be consistent with the number of symbols appear in the BPP expression. For example, if $ r: P \rightarrow YZ $ is the only rule that generates $ Y $(i.e. $ Y \in r^{\bullet} $), then $ y_r == x_Y $. As described above, the first part consists of following constraints: 
	\begin{itemize}
		\item For $ P \in V $, introduce constraint: $ x_p \ge 0 $
		\item For $ r \in \Delta $, introduce constraint: $ y_r \ge 0 $ 
		\item Let $ r_1, \cdots, r_k $ be rules whose left symbol is $ P $, note that symbol $ \alpha(P) $ represents the number of $ P $ in $ \alpha $, introduce constraint: 
		\begin{equation}
			\alpha(P) + \sum_{r \in \Delta} y_r r^{\bullet}(P) - \sum_{i=1}^k y_{r_i} = x_p
		\end{equation}
	\end{itemize}
	
	For any reachable state $ \alpha $, the second part rely on $ z_p $ to describe the precondition that rules that generate $ \alpha $ must be used at least once.
	\begin{itemize}
		\item For $ P \in V $, introduce constraint: $ x_p = 0 \vee z_p > 0 $
		\item Let $ r_1, \cdots, r_l $ be rules whose right symbol is $ P $, $ Y_1, \cdots, Y_l $ be left symbols of corresponding rules, introduce constraint: 
		\begin{equation}
			(z_p = 0 \wedge \bigwedge_{i=1}^l y_{r_i} = 0) \vee
			\bigvee_{i=1}^l (z_p = z_{Y_i} +1 \vee y_{r_i} > 0 \vee z_{Y_i} > 0) 
		\end{equation}
	\end{itemize}
	
	\subsection{Bounded EG-formulas: Linear Integer Arithmetic}
	\label{lsec}
	According to Theorem~\ref{theorem_EG}, model checking \textbf{EG} on BPP is undecidable so we cannot efficiently perform model checking liveness on BPP under standard semantics. A labelled transition system generated by a BPP and an initial BPP expression may have infinitely long but non-cyclic paths, resulting in an increasing number of process symbols of BPP states. In reality, however, most programs can terminate within a finite number of steps, so the number of processes does not tend to be infinite and the transition of a process tend to stop within a certain number of steps. Based on these observations, we present an approach for bounded model checking of \textbf{EG}-formulas on BPP, which is combined with ideas from our previous work~\cite{3,4,27}. Our approach contain three parts: (i) Proposing the $ k $-step bounded semantics of \textbf{EG}: $ s \models_k \varphi $. (ii) Constructing the corresponding linear integer arithmetic (LIA) for the given BPP and \textbf{EG}-formulas. (iii) Using our tool to solve the satisfiability of linear integer arithmetic.
	\subsubsection{K-step Bounded Semantics.}
	We give the $k$-step bounded semantics for \textbf{EG}-formulas, limits a BPP state to satisfying properties within $k$ steps.
	\begin{definition}
		Let $s$ be a BPP expression, $ \varphi $ be an \textbf{EG}-formula, $ k \ge 0 $, then the $K$-step bounded semantics $ s \models_k \varphi $ is inductively defined as:
		\begin{itemize}
			\item $ s \models_k \textbf{am}^T \ge b $ iff $\textbf{a}s^T \ge b $,
			\item $s \models_k \lnot\varphi $ iff $s \models_k \varphi $ is false,
			\item $ s\models_k \varphi_1 \wedge \varphi_2$ iff $ s\models_k \varphi_1$ and $ s\models_k \varphi_2 $,
			\item $ s \models_k \textbf{E} \left \langle a \right \rangle \varphi $ iff $ k \ge 1 $ and there exists a state $ t $ such that $ s\stackrel{a}{\longrightarrow} t $ and $ t \models_k \varphi $,
			\item $ s \models_k \textbf{EG}\varphi $ iff there exists a path $ \pi $ such that $\pi(0) = s $ and $ \forall i \ge 0, \pi(i) \models_k \varphi$.
		\end{itemize}
	\end{definition}
	Note that for $ s \models_k \textbf{E} \left \langle a \right \rangle \varphi $, since the satisfication of $ \textbf{E} \left \langle a \right \rangle \varphi $ needs at least 1 transition so the number of steps has a lower bound 1.
	
	\subsubsection{Algorithm for LIA Construction.}
	Linear integer arithmetic is a first-order theory and the syntax of linear integer arithmetic is : 
	$ \psi ::= a_0+a_1x_1+\cdots+a_nx_n \triangleleft 0 | \lnot\psi | \psi \wedge \psi | \exists x.\psi $, where $ a_0, a_1, \cdots, a_n \in \mathbb{Z}, \triangleleft \in \{>, =\} $.
	
	According to $ k $-step bounded semantics of EG-formulas given above, we can construct the corresponding LIA. we first define some basic constraints to represent BPP's behavior and then give an algorithm to construct LIA.
	\subsubsection{Constraints.} Firstly, we introduce some basic symbols and mappings. Similar with the definition in Petri nets, for each rule $ r\in \Delta $, define $^{\bullet}r$ as symbols on the left, $ r^{\bullet} $ as BPP expressions on the right and $ r^{act} $ as the action. For example, if $ r= X_1 \stackrel{b}{\longrightarrow} X_2 $, then $^{\bullet}r = X_1, r^{\bullet} = X_2 $ and $ r^{act}=b $.
	
	Define mapping $P^{-}$ like Parikh mapping $ P $, where each element represents the number of symbols - 1. For example, $ P^{-}(X_2, X_1 X_2 X_2 X_3 X_3)=(1,1,2)$. Given $ r\in \Delta $, n-dimensional vector $ s=(s_1,..., s_n) $ and $ t=(t_1,..., t_n) $, define $T^{-}(s,t,r)$ as:
	\begin{equation}
		T^{-}(s,t,r) = \bigwedge_{i=1}^n (s_i + P^{-}(^{\bullet}r, r^{\bullet})_i = t_i)
	\end{equation} 
	
	Note that $T^{-}(s,t,r)$ reflects the change of the number of process symbols after $ s $ arrives at $ t $ through transition rule $ r $. Based on $T^{-}(s,t,r)$, we now give the definition of transition constraints $ T(s,t,r) $ (i.e. $ s \stackrel{a}{\longrightarrow} t $) that restricts each component of a legal BPP expression to be non-negative:
	\begin{equation}
	 	T(s,t,a) = \bigvee_{r\in \Delta} (r^{act} = a \wedge s(^{\bullet}r) \ge 1 \wedge T^{-}(s,t,r))
	 \end{equation} 
	
	$ s(^{\bullet}r) \ge 1 $ is necessary for the transition to be triggered, guaranteeing the requirement of BPP that occurrences of each process symbol in BPP expression cannot be negative. We define path constraints $ Path((u(0), \cdots, u(k))) $ as: 
	\begin{equation}
		Path((u(0), \cdots, u(k))) =\bigwedge_{j=1}^k [\bigvee_{r\in \Delta} (u(j-1)(^{\bullet}r)\ge 1 \wedge T^{-}(u(j-1),u(j),r))] 
	\end{equation}

$ u(j-1) $ can use the rule $ r \in \Delta$ to arrive $ u(j) $, which corresponds to $ u(j-1)(^{\bullet}r)\ge 1 $.

	The Algorithm generating corresponding LIA is presented in Algorithm~\ref{algo}. The algorithm is a recursive function, which recurses according to the structure of \textbf{EG}-formula $ \varphi $. If $ \varphi $ is an atomic formula, the corresponding constraints in the semantic definition are generated. If the outermost layer of $ \varphi $ is a logical proposition operator, it is constructed according to the semantics of negation and conjunction. For $ \varphi = \textbf{E} \left \langle a \right \rangle \varphi_1 $ or $ \varphi = \textbf{EG}\varphi_1 $, the corresponding LIA formula is generated through transition and path constraints under the bounded semantics. Finally, it returns a closed LIA formula. 
	\begin{algorithm}[h]
		\label{algo}
		\SetKwInOut{Input}{input}
		\SetKwInOut{Output}{output}
		\caption{ $ Trans(\varphi, s, k) $}
		\Input {EG-formula $ \varphi $, n-dimensional vector $ s=(s_1,..., s_n) $, natural number $k$}
		
		\Output {linear integer arithmetic formula $ \Psi $}
		\Begin{
			\Case{$ \varphi = \textbf{am}^T \ge \textbf{b} $}{
				$ \psi := \textbf{a}s^T \ge \textbf{b} $
			}
			\Case{$ \varphi = \lnot \varphi_1$} {
				$ \psi := \lnot Trans(\varphi_1, s, k) $
			}
			\Case{$ \varphi = \varphi_1 \wedge \varphi_2 $}{
				$ \psi := Trans(\varphi_1, s, k) \wedge Trans(\varphi_2, s, k) $
			}
			\Case{$ \varphi = E \left \langle a \right \rangle \varphi_1$}{
				$ \psi := k \ge 1 \wedge \exists t_1 \cdots \exists t_n. \wedge (T(s,(t_1,\cdots,t_n),a) \wedge Trans(\varphi_1, (t_1,\cdots, t_n), k)) $
			}
			\Case{$ \varphi = EG\varphi_1 $}{
				$ \theta_1 := Path((u(0)_1, \cdots, u(0)_n), \cdots, (u(k)_1, \cdots, u(k)_n)) $;\\
				$ \theta_2 := \bigwedge_{i=1}^n u(0)_i = s_i$;\\
				$ \theta_3 := \bigwedge_{j=0}^k Trans(\varphi_1,(u(j)_1,\cdots,u(j)_n),k)$;\\
				$ \psi := \exists u(0)_1 \cdots \exists u(0)_n \cdots \exists u(k)_1 \cdots \exists u(k)_n.\theta_1 \wedge \theta_2 \wedge \theta_3 $
			}
			\Return $ \psi $
		}
	\end{algorithm}

	We state that Algorithm~\ref{algo} is sound: If all components of the input vector $ s $ are greater than 0 and the generated formula $ \psi $ is satisfied, then any variable $x$ in $ \psi $ are non-negative, (i.e. $ x \ge 0 $). As decribed above, we give the following lemma and theorem: 
	\begin{lemma}
		\label{lemma1}
		Given $ r \in \Delta $, two $ n $-dimensional vectors: $ s= (s_1, \cdots, s_n) $ and $t = (t_1, \cdots, t_n) $. If $ s(^\bullet r) \ge 1$, $ T^{-}(s,t,r) $, and $ \forall 1 \le i \le n, s_i > 0$, then $ \forall 1\le i \le n, t_i > 0 $. 
		\end{lemma}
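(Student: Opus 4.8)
The plan is to unfold the definition of $T^{-}(s,t,r)$ and argue coordinate by coordinate. Write ${}^{\bullet}r = X_j$ for the left symbol of the rule $r$. By definition, the hypothesis $T^{-}(s,t,r)$ says precisely that $t_i = s_i + P^{-}({}^{\bullet}r, r^{\bullet})_i$ for every $1 \le i \le n$, so the whole claim reduces to bounding the entries of the vector $P^{-}({}^{\bullet}r, r^{\bullet})$ from below. First I would record the elementary bound on $P^{-}$ coming from its definition: for $i \ne j$, $P^{-}({}^{\bullet}r, r^{\bullet})_i$ is exactly the number of occurrences of $X_i$ in the right-hand side expression $r^{\bullet}$, hence $P^{-}({}^{\bullet}r, r^{\bullet})_i \ge 0$; for $i = j$ it is (number of occurrences of $X_j$ in $r^{\bullet}$) $- 1$, hence $P^{-}({}^{\bullet}r, r^{\bullet})_j \ge -1$.

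Next I would split into two cases. For any coordinate $i \ne j$, substituting the bound gives $t_i = s_i + P^{-}({}^{\bullet}r, r^{\bullet})_i \ge s_i + 0 = s_i$, and since $s_i > 0$ by hypothesis we conclude $t_i > 0$. For the coordinate $i = j$, note that $s({}^{\bullet}r) = s_j$, and the hypothesis $s({}^{\bullet}r) \ge 1$ — equivalently, since all quantities are integers, $s_j > 0$ already forces $s_j \ge 1$ — yields $t_j = s_j + P^{-}({}^{\bullet}r, r^{\bullet})_j \ge 1 + (-1) = 0$; and if $X_j$ additionally occurs in $r^{\bullet}$ then $P^{-}({}^{\bullet}r, r^{\bullet})_j \ge 0$ and the same estimate gives $t_j > 0$. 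Combining the two cases establishes $t_i > 0$ for all $1 \le i \le n$.

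I expect the only delicate point to be the left-symbol coordinate $j$: this is the unique place where $P^{-}$ can be negative (value $-1$, when firing $r$ consumes the occurrence of $X_j$ and does not regenerate it), and it is exactly the reason the hypothesis $s({}^{\bullet}r) \ge 1$ is required rather than merely $s_j \ge 0$. The argument there leans on integrality — over $\mathbb{Z}$ the conditions "$> 0$" and "$\ge 1$" coincide — so that the single unit subtracted by the transition cannot drop $t_j$ below $0$. Everything else is a direct substitution into $T^{-}(s,t,r)$ together with the non-negativity of Parikh counts.
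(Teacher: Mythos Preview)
Your approach is essentially identical to the paper's: unfold $T^{-}(s,t,r)$, split on whether $X_i$ coincides with the left symbol ${}^{\bullet}r$, use $r^{\bullet}(X_i)\ge 0$ for the off-coordinate case and the enabling hypothesis $s({}^{\bullet}r)\ge 1$ for the critical coordinate. One remark: at the coordinate $i=j$ you (correctly) only obtain $t_j \ge 0$, then overclaim $t_i > 0$ in your summary sentence; the paper's own proof does exactly the same, concluding ``$\ge r^{\bullet}(X_i) \ge 0$'' in that case, and Theorem~\ref{theorem3} in fact invokes the lemma to deduce $t_i \ge 0$, so the strict inequality in the lemma statement appears to be a typo rather than a gap in your argument.
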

	\begin{proof}
		According to the definition of $ T^{-}(s,t,r)$ and mapping $ P^- $,  for any $ \forall 1 \le i \le n $, we can get that $ t_i = s_i + P^{-}(^\bullet r , r^\bullet)_i = s_i + c^{-}(^\bullet r, r^\bullet,X_i) $. At the same time, $ \forall r \in \Delta, r^\bullet(X_i) \ge 0 $ is satisfied. We can discuss the following two cases:
		\begin{itemize}
			\item If $ X_i $ does not equal to $ ^\bullet r $, then $ t_i = s_i + r^\bullet(X_i) \ge s_i > 0 $;
			\item If $ X_i $ equals to $ ^\bullet r $, then $ t_i = s_i + r^\bullet(X_i) - 1 =  r^\bullet(X_i) + s(X_i) - 1 =  r^\bullet(X_i) + s(^\bullet r) - 1  \ge r^\bullet(X_i) \ge 0$.
			\end{itemize}
		This concludes the proof of lemma~\ref{lemma1}. \qed
		\end{proof}
	
	Using the Lemma~\ref{lemma1}, we can now prove the following theorem:
	\begin{theorem}
		\label{theorem3}
		Given an \textbf{EG}-formula $ \varphi $, a $ n $-dimensional vector $ s= (s_1, \cdots, s_n) $ and non-negative integer $ k $. Suppose $ \varphi $ is $ \textbf{E} \left \langle a \right \rangle \varphi_1 $ or $ \textbf{EG}\varphi_1 $. If $ \forall 1 \le i \le n$ and $Trans(\varphi, s, k)$, $s_i \ge 0$, then any additional variables $ x $ produced in the construction of LIA through $ Trans(\varphi, s, k)$, are nonnegative (i.e. $ x \ge 0$).
		\end{theorem}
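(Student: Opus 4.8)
The plan is to prove the statement by structural induction on the \textbf{EG}-formula $\varphi$, but first strengthening it so the induction closes: \emph{for every} \textbf{EG}-formula $\varphi$ (not only the modality-headed ones) and every $n$-dimensional vector $s$ with $s_i \ge 0$ for all $i$, if $Trans(\varphi, s, k)$ holds then every auxiliary variable introduced during the construction of $Trans(\varphi, s, k)$ is non-negative. Theorem~\ref{theorem3} is then exactly the instance of this generalization in which the outermost connective of $\varphi$ is $\textbf{E}\langle a\rangle$ or $\textbf{EG}$.

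First the easy cases. If $\varphi = \textbf{am}^T \ge \textbf{b}$, the algorithm returns $\textbf{a}s^T \ge \textbf{b}$ and introduces no fresh variable, so there is nothing to check. If $\varphi = \lnot\varphi_1$, $\varphi = \varphi_1 \wedge \varphi_2$ (and the derived Boolean connectives $\vee, \Rightarrow$), the fresh variables of $Trans(\varphi, s, k)$ are precisely those of the recursive calls, each issued on the \emph{same} vector $s$ with $s_i \ge 0$, so the induction hypothesis applies verbatim to each subcall. The work is concentrated in the two modality cases.

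For $\varphi = \textbf{E}\langle a\rangle\varphi_1$, the algorithm introduces fresh $t_1,\dots,t_n$ subject to $T(s,(t_1,\dots,t_n),a) \wedge Trans(\varphi_1,(t_1,\dots,t_n),k)$. By the definition of $T$ as a disjunction over rules, some $r \in \Delta$ satisfies $s(^\bullet r)\ge 1$ and $T^{-}(s,t,r)$; invoking Lemma~\ref{lemma1} (or, rather, its non-strict analogue, since the proof of Lemma~\ref{lemma1} uses only $s(^\bullet r)\ge 1$ and $r^\bullet(X_i)\ge 0$, the two-case split goes through unchanged with $\ge 0$ in place of $>0$) we obtain $t_i \ge 0$ for all $i$. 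Hence $Trans(\varphi_1, t, k)$ is called on a non-negative vector and the induction hypothesis gives non-negativity of all remaining fresh variables. For $\varphi = \textbf{EG}\varphi_1$, the fresh variables are $u(0)_1,\dots,u(k)_n$ constrained by $\theta_1 \wedge \theta_2 \wedge \theta_3$: $\theta_2$ forces $u(0)_i = s_i \ge 0$; $\theta_1 = Path(\dots)$ provides, for each $j \ge 1$, a rule with $u(j-1)(^\bullet r)\ge 1$ and $T^{-}(u(j-1),u(j),r)$, so a short inner induction on $j$ using Lemma~\ref{lemma1} propagates $u(j)_i \ge 0$ along the whole path; and $\theta_3 = \bigwedge_{j=0}^{k} Trans(\varphi_1, u(j), k)$ applies $Trans(\varphi_1,\cdot,k)$ only to the vectors $u(j)$ just shown to be non-negative, so the induction hypothesis discharges the rest.

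The main obstacle is not any individual computation but getting the induction to be well-founded on the right statement: Theorem~\ref{theorem3} as phrased speaks only of modality-headed $\varphi$, whereas the $\textbf{E}\langle a\rangle$ and $\textbf{EG}$ cases recurse into $Trans(\varphi_1,\cdot,k)$ for an arbitrary subformula $\varphi_1$ (atomic, Boolean, or another modality), so one must first generalize to all \textbf{EG}-formulas before the argument is legitimate. A secondary point requiring care is the sign convention of Lemma~\ref{lemma1}, stated with strict inequalities: here it is non-strict non-negativity that must be propagated, so one should either cite the obvious non-strict variant of the lemma or re-run its two-case argument with $\ge$ throughout, noting that the crucial lower bound $s(^\bullet r)\ge 1$ comes from the transition/path constraint itself rather than from the hypothesis on $s$.
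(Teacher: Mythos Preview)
Your proposal is correct and follows essentially the same strategy as the paper: the case split on the outermost modality, the appeal to Lemma~\ref{lemma1} for a single transition in the $\textbf{E}\langle a\rangle$ case, and the inner induction on $j$ along the $Path$ constraint in the $\textbf{EG}$ case are exactly what the paper does. Where you differ is in rigor rather than in method. The paper's proof treats only the variables introduced \emph{directly} at the outermost modality ($t_1,\dots,t_n$ resp.\ $u(0)_1,\dots,u(k)_n$) and never discusses the variables created inside the recursive call $Trans(\varphi_1,\cdot,k)$; you close this gap by strengthening the claim to all \textbf{EG}-formulas and running a genuine structural induction, which is the right fix. You also correctly flag that Lemma~\ref{lemma1} is stated with strict inequalities while the theorem needs the non-strict version; the paper silently uses the non-strict conclusion without comment, so your remark that the two-case argument goes through unchanged with $\ge$ is a real, if minor, repair.
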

	
	\begin{proof}
		According to the type of $\varphi$, we discuss the following two cases:
		\begin{itemize}
			\item Suppose that $ \varphi $ is $ \textbf{E} \left \langle a \right \rangle \varphi_1 $. 
			
			During the construction of $ Trans(\textbf{E} \left \langle a \right \rangle \varphi_1,s, k) $, new variables are $ t_1, \cdots, t_n $, and $ T(s,(t_1,\cdots, t_n),a) $ is satisfied. Then from the definition of \\$ T(s,(t_1,\cdots, t_n),a) $, there exists an $ r\in \Delta $, s.t. $ r^{act}=a, s(^\bullet r)\ge 1 $, and $ T^{-}(s,t,r) $. So from Lemma~\ref{lemma1}, $ \forall 1 \le i \le n,  t_i \ge 0 $.
			\item Suppose that $ \varphi $ is $ \textbf{EG}\varphi_1$. 
			
			During the construction of $ Trans(\textbf{EG}\varphi_1,s, k) $, new variables are $ u(0)_1, \cdots,\\ u(0)_n, \cdots, u(k)_1, \cdots, u(k)_n $. We prove this theorem by structural induction on $ j $, the step size. 
			
			Basis: For $ j=0 $, $ \forall 1 \le i \le n, u(0)_i = s_i \ge 0 $. Therefore new variables $  u(0)_1, \cdots, u(0)_n $ are non-negative.
			
			Induction step: Assume that $ j \ge 1 $ and the theorem is true for $ j-1 $ steps, i.e. $ u(j-1)_1, \cdots, u(j-1)_n$ are non-negative. According to $ Path((u(0)_1, \cdots,\\ u(0)_n), \cdots, (u(k)_1, \cdots, u(k)_n)) $, there exists $ r \in \Delta $, s.t. $ u(j-1)(^\bullet r)  \ge 1$ and $ T^{-}(u(j-1), u(j), r) $ is satisfied. So from Lemma~\ref{lemma1}, $ u(j)_1, \cdots, u(j)_n $ are non-negative.
			\end{itemize}
		This concludes the proof of Theorem~\ref{theorem3}. \qed
		\end{proof}
	According to Algorithm~\ref{algo}, it will produce new variables only when meet $ \textbf{E} \left \langle a \right \rangle$ operators and \textbf{EG} operators. So according to Theorem~\ref{theorem3}, as long as the input $ s $ is a legal BPP state (i.e. each component of the vector $ s $ is non-negative), the coding based on Algorithm~\ref{algo} will not produce an illegal BPP state. Then the reliability and soundness theorem of Algorithm~\ref{algo} is stated as follows:
	\begin{theorem}
		\label{theorem4} Given an \textbf{EG}-formula $ \varphi $, a n-dimensional vector $ s $, and a natural number $ k $, Algorithm~\ref{algo} is correct: 
		\begin{itemize}
			\item Termination: The algorithm will terminate after a finite number of steps.
			\item Reliability: If $ s \models_k \varphi $, then $ Trans(\varphi, s, k) $ is satisfied.
			\item Soundness: If $ Trans(\varphi, s, k) $ is satisfied, then $ s \models_k \varphi $.
		\end{itemize}
	\end{theorem}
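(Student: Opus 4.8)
The plan is to prove Theorem~\ref{theorem4} by structural induction on the \textbf{EG}-formula $\varphi$, mirroring the recursive structure of Algorithm~\ref{algo}. For \textbf{Termination}, I would observe that each recursive call $Trans(\varphi', s', k)$ is invoked on a strict syntactic subformula $\varphi'$ of $\varphi$; since $\varphi$ is finite, the recursion tree has bounded depth and finite branching (the $\wedge$ case branches into two, the others into at most one), so the algorithm halts after finitely many steps. This is the routine part and I would dispatch it quickly.

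For \textbf{Reliability} and \textbf{Soundness} I would handle both directions together, case by case on the outermost operator, since the argument is essentially a biconditional unwinding of the $k$-step bounded semantics against the algorithm's output. The atomic case $\varphi = \textbf{am}^T \ge \textbf{b}$ is immediate: the algorithm emits exactly $\textbf{a}s^T \ge \textbf{b}$, which is the semantic clause verbatim. The Boolean cases $\lnot\varphi_1$ and $\varphi_1 \wedge \varphi_2$ follow from the induction hypothesis applied to the subformulas, together with the fact that the algorithm builds $\lnot Trans(\varphi_1,s,k)$ and $Trans(\varphi_1,s,k)\wedge Trans(\varphi_2,s,k)$ respectively, which syntactically match the semantic definitions of negation and conjunction. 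For $\varphi = \textbf{E}\langle a\rangle\varphi_1$ I would show that a witnessing successor state $t$ with $s\xrightarrow{a}t$ and $t\models_k\varphi_1$ exists if and only if there is a rule $r\in\Delta$ with $r^{act}=a$, $s(^\bullet r)\ge 1$, and $T^-(s,t,r)$ holds (this is exactly the content of $T(s,t,a)$) together with $Trans(\varphi_1,t,k)$ satisfied (by IH equivalent to $t\models_k\varphi_1$); the guard $k\ge 1$ appears identically on both sides. Here I would invoke Lemma~\ref{lemma1} / Theorem~\ref{theorem3} to ensure the existentially quantified $t$ really ranges over legal (nonnegative) BPP states, so that the correspondence between the LIA model and an actual transition is faithful.

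The $\textbf{EG}\varphi_1$ case is where the real work lies, and I expect it to be the main obstacle. The algorithm encodes a finite prefix $u(0),\dots,u(k)$ of a path via $\theta_1$ (path constraints), pins $u(0)=s$ via $\theta_2$, and asserts $\varphi_1$ at every sampled point via $\theta_3 = \bigwedge_{j=0}^k Trans(\varphi_1,u(j),k)$. For the forward direction, given a path $\pi$ witnessing $s\models_k\textbf{EG}\varphi_1$, I would take $u(j) := \pi(j)$ for $j \le k$; the path relation of the BPP guarantees each consecutive pair satisfies some $u(j-1)(^\bullet r)\ge 1 \wedge T^-(u(j-1),u(j),r)$, giving $\theta_1$, while $\pi(j)\models_k\varphi_1$ together with the IH gives $\theta_3$. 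For the converse, from a satisfying assignment to $\theta_1\wedge\theta_2\wedge\theta_3$ I would build a path whose first $k{+}1$ states are $u(0),\dots,u(k)$ and which then extends arbitrarily (or stutters, if the semantics permits) — the subtle point to get right is that the semantic clause for $\textbf{EG}$ quantifies over an infinite path with $\varphi_1$ at \emph{all} $i\ge 0$, whereas the encoding only constrains the first $k$ steps; I would need to argue that under the $k$-step bounded semantics this truncation is exactly what is intended, i.e. that the bounded semantics of $\varphi_1$ itself only inspects behavior within $k$ further steps, so that extending the path beyond step $k$ cannot falsify any $\pi(i)\models_k\varphi_1$ for $i\le k$, and that the intended reading of $s\models_k\textbf{EG}\varphi_1$ is witnessed by the existence of such a length-$k$ prefix. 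Pinning down this last point — reconciling the "$\forall i\ge 0$" in the definition with the finite encoding — is the crux, and I would state explicitly which fixed interpretation of the bounded \textbf{EG} semantics makes the equivalence hold, then close the induction.
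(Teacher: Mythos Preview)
Your proposal is correct and follows essentially the same route as the paper: structural induction on $\varphi$, with termination dispatched by the strict-subformula recursion and the biconditional (reliability plus soundness) established case-by-case for atoms, $\lnot$, $\wedge$, $\textbf{E}\langle a\rangle$, and $\textbf{EG}$.

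One remark worth making: the subtlety you flag in the $\textbf{EG}$ case---that the semantic clause quantifies over all $i\ge 0$ while the encoding only constrains $u(0),\dots,u(k)$---is real, and the paper's own proof does not address it either; in the soundness direction it simply infers $s\models_k\textbf{EG}\varphi_1$ from the existence of a length-$k$ prefix all of whose points satisfy $\varphi_1$. So your instinct to pin down explicitly which reading of the bounded $\textbf{EG}$ semantics makes the equivalence go through is more careful than the paper itself; under the intended ``path of length $k$'' reading both arguments close, and you can proceed exactly as you outlined.
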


	\begin{proof}
		Since Algorithm~\ref{algo} constructs LIA formula recursively over the structure of \textbf{EG}-formula $ \varphi $, it can be concluded that Algorithm~\ref{algo} terminates. For reliability and soundness, we prove their correctness by structural induction on $ \varphi $:
	\begin{itemize}
		\item Suppose that $ \varphi $ is an atomic proposition, i.e. $ \varphi = \textbf{am}^T \ge b $, then from the definition $ s\models_k \textbf{am}^T \ge b $, reliability and soundness are satisfied.
		
		\item Suppose that $ \varphi = \lnot \varphi_1 $. 
		
		For reliability, assume that $ Trans(\lnot \varphi_1, s, k ) $ is unsatisfied, then according to Algorithm~\ref{algo}, $ \lnot Trans(\varphi_1, s, k ) $ is unsatisfied, so $ Trans(\varphi_1, s, k ) $ is correct. By induction hypothesis, $ s\models_k \varphi_1 $ is satisfied so $ s\models_k \lnot \varphi_1 $ is unsatisfied. 
		
		For soundness, whereas, assume that $ Trans(\lnot \varphi_1, s, k ) $ is correct, then according to Algorithm~\ref{algo}, $ \lnot Trans(\varphi_1, s, k ) $ is unsatisfied. By induction hypothesis, $ s\models_k \varphi_1 $ is unsatisfied so $ s\models_k \lnot \varphi_1 $ is satisfied.
		
		\item  Suppose that $ \varphi = \varphi_1 \wedge \varphi_2$. 
		
		For reliability, assume that $ Trans(\varphi = \varphi_1 \wedge \varphi_2, s, k ) $ is unsatisfied, then according to Algorithm~\ref{algo}, $ Trans(\varphi_1, s, k ) \wedge Trans(\varphi_2, s, k )$ is unsatisfied, so $ Trans(\varphi_1, s, k ) $ is unsatisfied or $ Trans(\varphi_2, s, k ) $ is unsatisfied. By induction hypothesis, $ s \models_k \varphi_1 $ or $ s\models_k \varphi_2 $ is unsatisfied so $ s\models_k \varphi_1 \wedge \varphi_2$ is unsatisfied. 
		
		For soundness, assume that $ Trans(\varphi_1 \wedge \varphi_2, s, k ) $ is correct, then according to Algorithm~\ref{algo}, $ \lnot Trans(\varphi_1, s, k ) $ is unsatisfied and $ \lnot Trans(\varphi_2, s, k ) $ is unsatisfied. By induction hypothesis, $ s\models_k \lnot \varphi_1 $ and $ s\models_k \lnot \varphi_2 $ are unsatisfied so $ s\models_k \varphi_1 \wedge \varphi_2 $ is satisfied.
		
		\item  Suppose that $ \varphi = \textbf{E} \left \langle a \right \rangle \varphi_1 $. 
		
		For reliability, assume that $ Trans(\textbf{E} \left \langle a \right \rangle \varphi_1, s, k ) $ is unsatisfied, then $ k= 0 $ or, for any $ t_1, \cdots,t_n $, if $ T(s, (t_1, \cdots,t_n), a) $, then $ \lnot Trans(\textbf{E} \left \langle a \right \rangle \varphi_1, s, k ) $. If $ k = 0 $, obviously $ s\models_k \textbf{A}\left \langle a \right \rangle \lnot \varphi_1 $. For the latter, assume that $ T(s, (t_1, \cdots,t_n), a) $ is correct for any $ t_1,\cdots, t_n $, then $ \lnot Trans(\varphi_1, (t_1, \cdots,t_n), k ) $ is correct, therefore $Trans(\varphi_1, (t_1, \cdots,t_n), k ) $ is unsatisfied. By induction hypothesis,\\ $ (t_1, \cdots,t_n) \models_k \varphi_1 $ is unsatisfied, so $ s\models_k A \left \langle a \right \rangle \lnot \varphi_1$ is satisfied. In conclusion, $ s\models_k \textbf{E} \left \langle a \right \rangle \varphi_1 $ is unsatisfied. 
		
		For soundness,  assume that $ Trans(\textbf{E} \left \langle a \right \rangle \varphi_1, s, k ) $ is correct, then $ k \ge 1 $ and there exists $ (t_1, \cdots, t_n) $, s.t. $ T(s, (t_1, \cdots,t_n), a) \wedge Trans(\varphi_1, (t_1, \cdots,t_n), k ) $, so $\lnot  Trans(\textbf{E} \left \langle a \right \rangle \varphi_1, (t_1, \cdots,t_n), k ) $ is unsatisfied. By induction hypothesis,\\$ (t_1, \cdots,t_n) \models_k \lnot \varphi_1 $ is unsatisfied, so $ (t_1, \cdots,t_n) \models_k \varphi_1$ is satisfied. In conclusion, $ s\models_k \textbf{E} \left \langle a \right \rangle \varphi_1 $ is satisfied. 
		
		\item  Suppose that $ \varphi = \textbf{EG}\varphi_1 $. Let $ \varphi_{conj} = \theta_1 \wedge \theta_2 $ in the Algorithm~\ref{algo}. 
		
		For reliability, assume that $ Trans(\textbf{EG}\varphi_1, s, k ) $ is unsatisfied, then for any $ u(0)_1, \cdots, u(0)_n, \cdots, u(k)_1, \cdots, u(k)_n $, if formula $ \varphi_{conj} $ is satisfied, then \\$ \lnot \bigwedge_{j=0}^k Trans(\varphi_1, (u(j)_1,\cdots, u(j)_n),k) $. So there exists $ 0\le j\le k $,\\ s.t. $ \lnot Trans(\varphi_1, u(j),k) $ is satisfied, so we find $  Trans(\varphi_1, u(j),k) $ unsatisfied. By induction hypothesis, $ u(j) \models_k \varphi_1 $ is unsatisfied, so $ s \models_k\textbf{ AF} \lnot \varphi_1 $ is satisfied, i.e. $ s \models_k\textbf{EG} \varphi_1 $ is unsatisfied.
		
		For soundness, assume that $ Trans(\textbf{EG}\varphi_1, s, k ) $ is correct, then there exists $ u(0)_1, \cdots, u(0)_n, \cdots, u(k)_1, \cdots, u(k)_n $, s.t. $ \varphi_{conj} \wedge \bigwedge_{j=0}^k Trans(\varphi_1, (u(j)_1,\cdots,\\ u(j)_n),k)$. So for any $ 0\le j \le k $, $ \lnot Trans(\varphi_1, u(j), k) $ is unsatisfied. By induction hypothesis, $ u(j)\models_k \lnot \varphi_1 $ is unsatisfied, i.e. $ u(j)\models_k \varphi_1 $ is satisfied. In conclusion, $s \models_k\textbf{EG} \varphi_1 $ is satisfied.
	\end{itemize}
	This concludes the proof of Theorem~\ref{theorem4}. \qed
\end{proof}
	
	\section{ACS2BPP: Reduction from ACS to BPP}
	\label{ACSBPP}
	In this section, we introduce the ACS2BPP module of BPPChecker. We give Actor Communicating System (ACS) the over-approximation BPP-based semantics to reduce ACS to BPP. With the support of the Erlang verifier Soter and ACS2BPP module, we can easily transfer Erlang programs to ACS and then verify \textbf{EF}-formulas defined safety properties on ACS. 
	
	\subsection{BPP-based Semantics of ACS} 
	Considering the case when the rule $ q_1 \stackrel{p?m}{\longrightarrow} q_2$ of ACS is used, we can observe that two symbols will decrease at the same time: (i) The number of state $ q_1 $ will decrease as it transfers to $ q_2 $; (ii) The side-effect of $ p?m $ will consume a message $ m $ from its mailbox. In this case, the value of two different symbols should be decreased at the same time, whereas the rule of BPP only allows one symbol to appear on the left side. 
	
	So we adapt two rules of ACS to BPP so that we can use BPP to simulate and verify ACS~\cite{30}. The modification is to add two labels $ in $ and $ out $ to each message, respectively recording the history of the message arriving and leaving the mailbox. The mapping from mailbox to message $ P \times M $ is now replaced with $ P \times \{M \times \{in, out\}\} $, then the BPP semantics of ACS is defined as follows: 
	
	Given an ACS $ \mathcal{A}=(Q, P, M, R) $, we construct BPP$ (V,\Delta) $, where $ V = Q \cup (P \times M \times \{in, out\})$. For $ q_1 \stackrel{op}{\longrightarrow} q_2 \in R $, according to the type of $ op $, we define $ \Delta $ as : 
	\begin{itemize}
		\item  $ op = nop $, add a rule $ q_1 \rightarrow q_2 $;
		\item $ op = \nu q_3 $, add a rule $ q_1 \rightarrow q_2 || q_3$;
		\item $ op = p!m $, add a rule $ q_1 \rightarrow q_2 || (p, m^{in}) $;
		\item $ op = p?m $, add a rule $ q_1 \rightarrow q_2 || (p, m^{out}) $
	\end{itemize}
	
	\subsection{Place Convert Function}
	We also give an algorithm called \textit{place convert function} for transforming places from original semantics to BPP-based semantics.
	
	Given an ACS $ \mathcal{A}=(Q, P, M, R) $, where $ |Q| = x \in \mathbb{N}$, input a place $ c_\mathcal{A} = (u_\mathcal{A}, v_\mathcal{A}) $ under the original semantics to \textit{place convert function} will output a new place $ c_\mathcal{B}=(u_\mathcal{B}, v_\mathcal{B}) $ satisfying transformation as follows:
	\begin{itemize}
		\item $ \forall 1 \le i \le x$, $ u_\mathcal{B}[q_i] =u_\mathcal{A}[q_i]$;
		\item $ \forall p\in P, m \in M$, $ v_\mathcal{B}[(p, m^{in})] = v_\mathcal{A}[(p, m)] $;
		\item $ \forall p \in P, m \in M $,$ v_\mathcal{B}[(p, m^{out})] = 0 $.
	\end{itemize}
	
	\indent	
	Under the new semantics, the state counter remains unchanged, while the message counter change. For process $ p $ and message $ m $, the number of $ m^{in} $ in $ p $ equals to the number of $ m $ in $ p $ in the original semantics and the initial count $ m^{out} $ in $ p $ is 0. When $ p $ consumes message $ m $, the original rule reduces $ m $ in $ p $, while the new rule increases $ m^{out} $ in p.
	
	We state that the BPP-based semantics of ACS is an over-approximation of ACS, which means it contains more behaviors than the original semantics. The following Theorem~\ref{theorem:convert} proves that for any reachable place $ c_\mathcal{A} $ under the original semantics, there exists an reachable place $ c_\mathcal{B} $, whose number of states is the same as that of $ c_\mathcal{A} $ and the difference between $ (p, m^{in}) $ and $ (p, m^{out}) $ in $ c_\mathcal{B} $ equals to the number of $ (p,m) $ in $ c_\mathcal{A} $.
	\begin{theorem}
		\label{theorem:convert}
		Given an ACS $ \mathcal{A}=(Q, P, M, R) $, where $ |Q| = x, |P| = y, |M| = z$, where $ x,y,z \in \mathbb{N}$, and given a place $ c_{\mathcal{A}_0} $ under the original semantics, if there exists $ c_\mathcal{A} = (u_\mathcal{A},v_\mathcal{A})$ that satisfies:
		\begin{itemize}
			\item $ \forall 1\le i \le x, u_\mathcal{A}[q_i] =k_i$;
			\item $ \forall 1\le i \le y, v_\mathcal{A}(p_i) = m_1^{h_{i1}}\cdots m_z^{h_{iz}}$, where $ v_\mathcal{A}(p_i) $ denotes messages of $ p_i $;
			\item $ c_{\mathcal{A}_0} \Rightarrow^* c_\mathcal{A} $,
		\end{itemize}
	then there exists a place $ c_\mathcal{B} = (u_\mathcal{B}, v_\mathcal{B})$ under the BPP-based semantics such that:
		\begin{itemize}
		\item $ \forall 1\le i \le x, u_\mathcal{B}[q_i] =k_i$;
		\item $ \forall 1\le i \le y, v_\mathcal{B}(p_i) = {m_1^{in}}^{r_{i1}} {m_1^{out}}^{s_{i1}}\cdots {m_z^{in}}^{r_{iz}} {m_z^{out}}^{s_{iz}}$, where $ v_\mathcal{B}(p_i) $ denotes messages of $ p_i $;
		\item $ \forall 1\le i \le y, \forall 1\le j \le z, r_{ij}-s_{ij} =h_{ij} $;
		\item $ c_{\mathcal{B}_0} \Rightarrow^\star c_\mathcal{B} $, where $ c_{\mathcal{B}_0} $ is the output of \textit{place convert function} with the input $ c_{\mathcal{A}_0} $.
	\end{itemize}
		\end{theorem}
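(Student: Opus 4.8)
The plan is to establish a forward-simulation statement and then read the four bulleted conditions off it. I would define a relation $\approx$ between an ACS place $c_\mathcal{A} = (u_\mathcal{A}, v_\mathcal{A})$ (under the original semantics) and a BPP place $c_\mathcal{B} = (u_\mathcal{B}, v_\mathcal{B})$ (under the BPP-based semantics) by: $c_\mathcal{A} \approx c_\mathcal{B}$ iff (i) $u_\mathcal{B}[q_i] = u_\mathcal{A}[q_i]$ for every control state $q_i$, and (ii) $v_\mathcal{B}[(p,m^{in})] - v_\mathcal{B}[(p,m^{out})] = v_\mathcal{A}[(p,m)]$ for every $p \in P$ and $m \in M$. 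The theorem's conclusion is exactly that some $c_\mathcal{B}$ with $c_\mathcal{A} \approx c_\mathcal{B}$ is reachable from $c_{\mathcal{B}_0}$, so it suffices to prove (a) $c_{\mathcal{A}_0} \approx c_{\mathcal{B}_0}$, and (b) whenever $c_\mathcal{A} \approx c_\mathcal{B}$ and $c_\mathcal{A} \stackrel{r}{\Rightarrow} c_\mathcal{A}'$ in the ACS, there is a single BPP step $c_\mathcal{B} \Rightarrow c_\mathcal{B}'$ with $c_\mathcal{A}' \approx c_\mathcal{B}'$; the statement then follows by induction on the length of the derivation $c_{\mathcal{A}_0} \Rightarrow^\star c_\mathcal{A}$.

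Claim (a) is immediate from the definition of the place convert function: it copies the state counters, sets $v_{\mathcal{B}_0}[(p,m^{in})] = v_{\mathcal{A}_0}[(p,m)]$ and $v_{\mathcal{B}_0}[(p,m^{out})] = 0$, so clause (ii) holds with every $m^{out}$-component equal to $0$. For claim (b), write $r = q_1 \stackrel{op}{\longrightarrow} q_2$. Since $r$ is enabled at $c_\mathcal{A}$ we have $u_\mathcal{A}[q_1] \ge 1$, hence $u_\mathcal{B}[q_1] \ge 1$ by clause (i), so the BPP rule associated with $r$ in the construction of $\Delta$ is firable at $c_\mathcal{B}$; let $c_\mathcal{B}'$ be the result. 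Now I would case on $op$. If $op = nop$ or $op = \nu q_3$, the BPP rule $q_1 \to q_2$ (respectively $q_1 \to q_2 || q_3$) changes the state counters exactly as the ACS rule does and touches no message counter, so $\approx$ is preserved. If $op = p!m$, the ACS step increments $v_\mathcal{A}[(p,m)]$ while the BPP rule $q_1 \to q_2 || (p,m^{in})$ increments $v_\mathcal{B}[(p,m^{in})]$ and leaves $v_\mathcal{B}[(p,m^{out})]$ fixed, so clause (ii) still holds.

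The one case requiring care — and the main obstacle — is $op = p?m$. In the ACS this step is additionally guarded by $v_\mathcal{A}[(p,m)] \ge 1$ and decrements that counter, whereas the BPP rule $q_1 \to q_2 || (p,m^{out})$ has no such guard and instead increments $v_\mathcal{B}[(p,m^{out})]$, leaving $v_\mathcal{B}[(p,m^{in})]$ unchanged; the net change to $v_\mathcal{B}[(p,m^{in})] - v_\mathcal{B}[(p,m^{out})]$ is therefore $-1$, matching the ACS update, so $c_\mathcal{A}' \approx c_\mathcal{B}'$. This is precisely where the over-approximation manifests — the BPP rule fires even when the ACS receive cannot — but in the direction needed here (an ACS step forces a matching BPP step) it causes no trouble. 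To finish, I would unfold $\approx$ into the statement's bullets: clause (i) gives $u_\mathcal{B}[q_i] = k_i$; setting $r_{ij} = v_\mathcal{B}[(p_i,m_j^{in})]$ and $s_{ij} = v_\mathcal{B}[(p_i,m_j^{out})]$ yields the stated form of $v_\mathcal{B}(p_i)$ and, via clause (ii), $r_{ij} - s_{ij} = v_\mathcal{A}[(p_i,m_j)] = h_{ij}$; and the reachability $c_{\mathcal{B}_0} \Rightarrow^\star c_\mathcal{B}$ is the concatenation of the single BPP steps produced along the induction, which completes the proof.
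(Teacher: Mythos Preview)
Your proposal is correct and follows essentially the same approach as the paper: induction on the length of the derivation $c_{\mathcal{A}_0} \Rightarrow^\star c_\mathcal{A}$, with a case analysis on the operation $op$ of the last rule applied, checking in each case that the corresponding BPP rule is firable (via $u_\mathcal{B}[q_1]\ge 1$) and preserves the invariant on state counters and on the differences $v_\mathcal{B}[(p,m^{in})]-v_\mathcal{B}[(p,m^{out})]$. The only cosmetic difference is that you name the invariant as a simulation relation $\approx$ up front, whereas the paper leaves it implicit in the bullet list and re-verifies it inline in each case.
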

	
	\begin{proof}
		We prove the correctness of Theorem~\ref{theorem:convert} by structural induction:
		
		Basis step: $ c_\mathcal{A} $ is $ c_{\mathcal{A}_0} $, then the output $ c_{\mathcal{B}_0} $ is the place of new semantics.
		
		Induction step: Without loss of generality, let $ r = q_1\stackrel{op}{\longrightarrow} q_2$, and $ c_{\mathcal{A}_0} \Rightarrow^\star c^{'}_\mathcal{A} \stackrel{r}{\Rightarrow} c_\mathcal{A}$, then we discuss the cases of $ op $ as below:
		\begin{itemize}
			\item $ \textbf{op=nop} $. 
			
			According to the rule of ACS semantics $ c^{'}_\mathcal{A}=(u^{'}_\mathcal{A},v^{'}_\mathcal{A}) $ and $ u^{'}_\mathcal{A}[q_1]=k_1+1, u^{'}_\mathcal{A}[q_2]=k_2-1 $ is satisfied. By induction hypothesis, there exists the place $ (u^{'}_\mathcal{B},v^{'}_\mathcal{B}) $ under the new semantics, s.t. $ (u_{\mathcal{B}_0}, v_{\mathcal{B}_0}) \Rightarrow^\star (u^{'}_\mathcal{B},v^{'}_\mathcal{B})$, and at the same time $ u^{'}_\mathcal{B} = u^{'}_\mathcal{A}$. So $ u^{'}_\mathcal{B}[q_1] = u^{'}_\mathcal{A}[q_1] = k_1 +1 \ge 1$. Therefore we can get $ (u^{'}_\mathcal{B},v^{'}_\mathcal{B}) \stackrel{r}{\Rightarrow} (u_\mathcal{B},v_\mathcal{B}) $, in which $ u_\mathcal{B}[q_1] = u^{'}_\mathcal{B}[q_1]-1=k_1$ and $ u_\mathcal{B}[q_2] = u^{'}_\mathcal{B}[q_2]+1=k_2$. Beyond that $ nop $ has no other side-effect, so the place $ (u_\mathcal{B},v_\mathcal{B}) $ is eligible.
			
			\item $ \textbf{op=} \nu \textbf{q}_3 $. 
			
			According to the rule of ACS semantics $ c^{'}_\mathcal{A}=(u^{'}_\mathcal{A},v^{'}_\mathcal{A}) $ and $ u^{'}_\mathcal{A}[q_1]=k_1+1, u^{'}_\mathcal{A}[q_2]=k_2-1,  u^{'}_\mathcal{A}[q_3]=k_3-1$ is satisfied. By induction hypothesis, there exists the place $(u^{'}_\mathcal{B},v^{'}_\mathcal{B}) $ under the new semantics, s.t. $ (u_{\mathcal{B}_0}, v_{\mathcal{B}_0}) \Rightarrow^\star (u^{'}_\mathcal{B},v^{'}_\mathcal{B})$, and at the same time $ u^{'}_\mathcal{B} = u^{'}_\mathcal{A}$. Therefore we can get $ (u^{'}_\mathcal{B},v^{'}_\mathcal{B}) \stackrel{r}{\Rightarrow} (u_\mathcal{B},v_\mathcal{B}) $, in which $ u_\mathcal{B}[q_1] = u^{'}_\mathcal{B}[q_1]-1=k_1$ , $ u_\mathcal{B}[q_2] = u^{'}_\mathcal{B}[q_2]+1=k_2$ and $ u_\mathcal{B}[q_3] = u^{'}_\mathcal{B}[q_3]+1=k_3$. Beyond that $ (u_\mathcal{B},v_\mathcal{B}) $ is consistent with  $ (u^{'}_\mathcal{B},v^{'}_\mathcal{B})$. So $ c_\mathcal{B}=(u_\mathcal{B},v_\mathcal{B}) $ is eligible.
			
			\item $ \textbf{op=}{p_i}!{m_j} $. 
			
			According to the rule of ACS semantics $(u^{'}_\mathcal{A},v^{'}_\mathcal{A}) $ satisfies that $ u^{'}_\mathcal{A}[q_1]=k_1+1, u^{'}_\mathcal{A}[q_2]=k_2-1,  v^{'}_\mathcal{A}[(p_i,m_j)]=h_{ij}-1 $. By induction hypothesis, there exists the place $(u^{'}_\mathcal{B},v^{'}_\mathcal{B}) $ under the new semantics, s.t. $ (u_{\mathcal{B}_0}, v_{\mathcal{B}_0}) \Rightarrow^\star (u^{'}_\mathcal{B},v^{'}_\mathcal{B})$, and at the same time $ u^{'}_\mathcal{B} = u^{'}_\mathcal{A}$, $ v^{'}_\mathcal{B}[(p_i,{m_j}^{in})]-v^{'}_\mathcal{B}[(p_i,{m_j}^{out})]=h_{ij}-1 $. 
			We can get $ (u^{'}_\mathcal{B},v^{'}_\mathcal{B}) \stackrel{r}{\Rightarrow} (u_\mathcal{B},v_\mathcal{B}) $, in which $ u_\mathcal{B}[q_1] = u^{'}_\mathcal{B}[q_1]-1=k_1$ , $ u_\mathcal{B}[q_2] = u^{'}_\mathcal{B}[q_2]+1=k_2$ and $ v_\mathcal{B}[(p_i,{m_j}^{in})] = v^{'}_\mathcal{B}[(p_i,{m_j}^{in})]+1$. So, 
			\begin{equation*}
				\begin{aligned}
					&v_\mathcal{B}[(p_i,{m_j}^{in})] - v_\mathcal{B}[(p_i,{m_j}^{out})] \\
					&= v^{'}_\mathcal{B}[(p_i,{m_j}^{in})]+1 - v^{'}_\mathcal{B}[(p_i,{m_j}^{out})] \\
					&= h_{ij}-1+1 = h_{ij} \\
					&= v_\mathcal{A}[(p_i,m_j)]
				\end{aligned}
			\end{equation*}
			Beyond that other elements are consistent, so $ (u_{\mathcal{B}_0}, v_{\mathcal{B}_0}) \Rightarrow^\star (u_\mathcal{B},v_\mathcal{B}) $ and $ (u_\mathcal{B},v_\mathcal{B}) $ is eligible.
			
			\item $ \textbf{op}=p_i?m_j $. 
			
			According to the rule of ACS semantics $(u^{'}_\mathcal{A},v^{'}_\mathcal{A}) $ satisfies that $ u^{'}_\mathcal{A}[q_1]=k_1+1, u^{'}_\mathcal{A}[q_2]=k_2-1,  v^{'}_\mathcal{A}[(p_i,m_j)]=h_{ij}+1 $. By induction hypothesis, there exists the place $(u^{'}_\mathcal{B},v^{'}_\mathcal{B}) $ under the new semantics, s.t. $ (u_{\mathcal{B}_0}, v_{\mathcal{B}_0}) \Rightarrow^\star (u^{'}_\mathcal{B},v^{'}_\mathcal{B})$, and at the same time $ u^{'}_\mathcal{B} = u^{'}_\mathcal{A}$, $ v^{'}_\mathcal{B}[(p_i,{m_j}^{in})]-v^{'}_\mathcal{B}[(p_i,{m_j}^{out})]=h_{ij}+1 $. 
			We can get $ (u^{'}_\mathcal{B},v^{'}_\mathcal{B}) \stackrel{r}{\Rightarrow} (u_\mathcal{B},v_\mathcal{B}) $, in which $ u_\mathcal{B}[q_1] = u^{'}_\mathcal{B}[q_1]-1=k_1$ , $ u_\mathcal{B}[q_2] = u^{'}_\mathcal{B}[q_2]+1=k_2$ and $ v_\mathcal{B}[(p_i,{m_j}^{out})] = v^{'}_\mathcal{B}[(p_i,{m_j}^{out})]+1$. So, 
			\begin{equation*}
				\begin{aligned}
					&v_\mathcal{B}[(p_i,{m_j}^{in})] - v_\mathcal{B}[(p_i,{m_j}^{out})] \\
					&=v_\mathcal{B}[(p_i,{m_j}^{in})]-( v^{'}_\mathcal{B}[(p_i,{m_j}^{out})]+1) \\
					&= h_{ij}+1-1 = h_{ij} \\
					&= v_\mathcal{A}[(p_i,m_j)]
				\end{aligned}
			\end{equation*}
			Beyond that other elements are consistent, so $ (u_{\mathcal{B}_0}, v_{\mathcal{B}_0}) \Rightarrow^\star (u_\mathcal{B},v_\mathcal{B}) $ and $ (u_\mathcal{B},v_\mathcal{B}) $ is eligible.
		\end{itemize}
		This concludes the proof of Theorem~\ref{theorem:convert}. \qed
		\end{proof}
	So when the reachability (\textbf{EF}-formulas) of BPP is unsatisfied, the corresponding state in ACS is also unreachable. 
	
	Based on above techniques, our tool can verify safety properties of ACS and even Erlang programs more efficiently because we reduce the complexity from EXPSPACE-complete to NP-complete~\cite{9,28}. More specifically, safety properties include the unreachability of errors, boundedness of mailbox in system etc. Here we omit formal descriptions of safety properties due to space considerations and a case study is given in the following section to help understand.
	
	\subsection{Case Study}
	We give a case study of how BPP-based model checking of ACS combining techniques proposed above works.
	
	Given an ACS $ \mathcal{A}=(Q, P, M, R) $, where $ Q = \{q_0, q_1\}$, $ P = \{p\} $, $ M = \{m\}$ and $ R = \{q_0 \stackrel{p!m}{\longrightarrow} q_1, q_1 \stackrel{p?m}{\longrightarrow} q_0\} $. We assume that the initial place $ (u, v) $ under original semantics satisfies: $ u[q_0] = 1, u[q_1] = 0$ and $v[(p,m)] = 0 $. 
	
	Figure~\ref{fig:acs_example} shows the ACS generated by our BPP-based semantics. Each node represents a place and each directed edge represents a transition rule. The superscripted item like $ (p,m^{in})^k $ means the mapping of place satisfies $ v[(p, m^{in})] == k $.
	
	\begin{figure}
		\centering
		\includegraphics[width=0.75\textwidth]{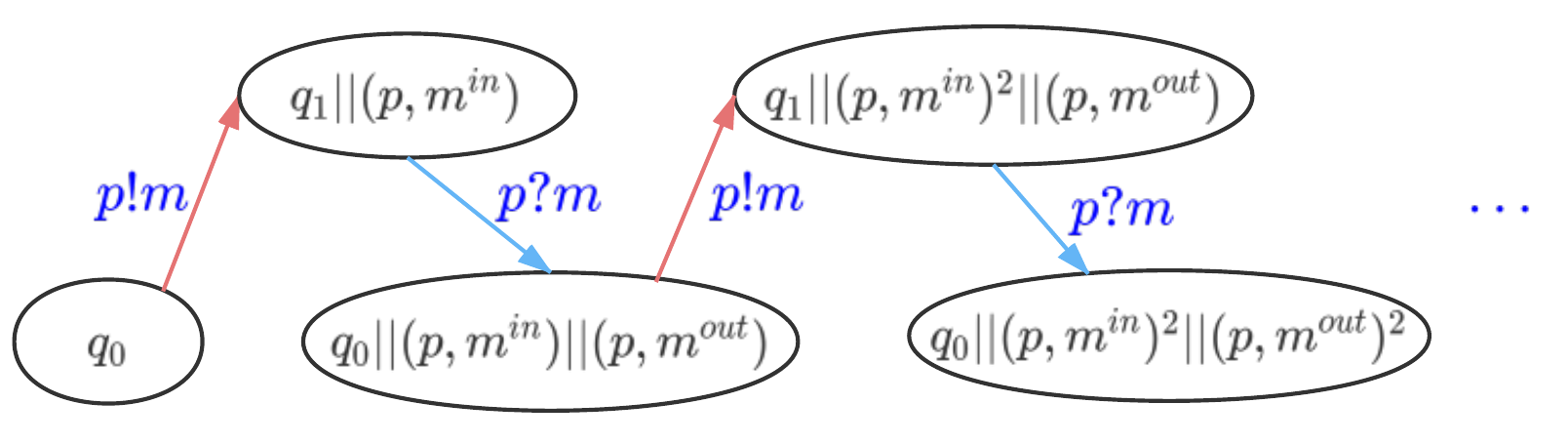}
		\caption{A BPP-based Actor Communicating System $ \mathcal{A} $}\label{fig:acs_example}
	\end{figure}
	
	It is obvious that at any time the number of $ q_0 $ and $ q_1 $ should not be larger than one, and process $ p $ contains a maximum of one message. So we can define the above two reachability properties of $ \mathcal{A} $ in the form of CTL formulas:
	(i) Neither $ \textbf{EF}(u[q_0] >= 2) $ nor $ \textbf{EF}(u[q_1] >= 2) $ can be satisfied. 
	(ii) The formula $ \textbf{EF}(v[(p, m^{in})] - v[(p, m^{out})]>= 2) $ can not be satisfied. 
	
	Given the information of BPP model and specification of properties, we can verify the safety properties of ACS by model checking the CTL on BPP and get the result.
	
	\subsubsection{Over-approximation Semantics Descriptions.} We also give an example here to show the over-approximation of our BPP-based semantics against original semantics. Consider a ACS $ \mathcal{A}=(\{q\}, \{p\}, \{m\}, R) $, where $ R $ consists of two rules: $ r_1 = q \stackrel{p!m}{\longrightarrow} q $, $ r_2 = q \stackrel{p?m}{\longrightarrow} q $. Assume that in the initial place $ c_\mathcal{A} $ mailbox of $ p $ is empty. Intuitively, the original semantics require that a message $ m $ can be consumed only when at least one message $ m $ is in the mailbox. So $ c_\mathcal{A} $ first uses $ r_1 $ to pattern-match $ m $ and then consume $ m $ using $ r_2 $. The BPP-based semantics, however, allow $ p $ to consume $ m $ without containing the $ m $, i.e., the initial place $ c_\mathcal{B} $ generated by \textit{place convert function} can use both $ r_1 $ and $ r_2 $. 
	
	\section{Tool Implementation}
	In this section, we provide more details of design and implementation of BPPChecker. The overall architecture of BPPChecker is presented in Figure~\ref{fig:BPPMC}.
	
	\subsubsection{Overview of structure.} BPPChecker uses the Python3 interface of SMT solver Z3(v4.8.5) as a library for solving linear integer arithmetic. Through input files, users provide information on BPP and CTL formulas. Through instructions, users can specify the step size of bounded \textbf{EG} model checking and whether to output additional information besides the solved result. BPPChecker is mainly divided into three modules: (i) syntax parser; (ii) BPP and formula model; (iii) a model checker which implements the algorithms in Section~\ref{sec3}.
	\begin{figure}[htbp]
		\includegraphics[width=\textwidth]{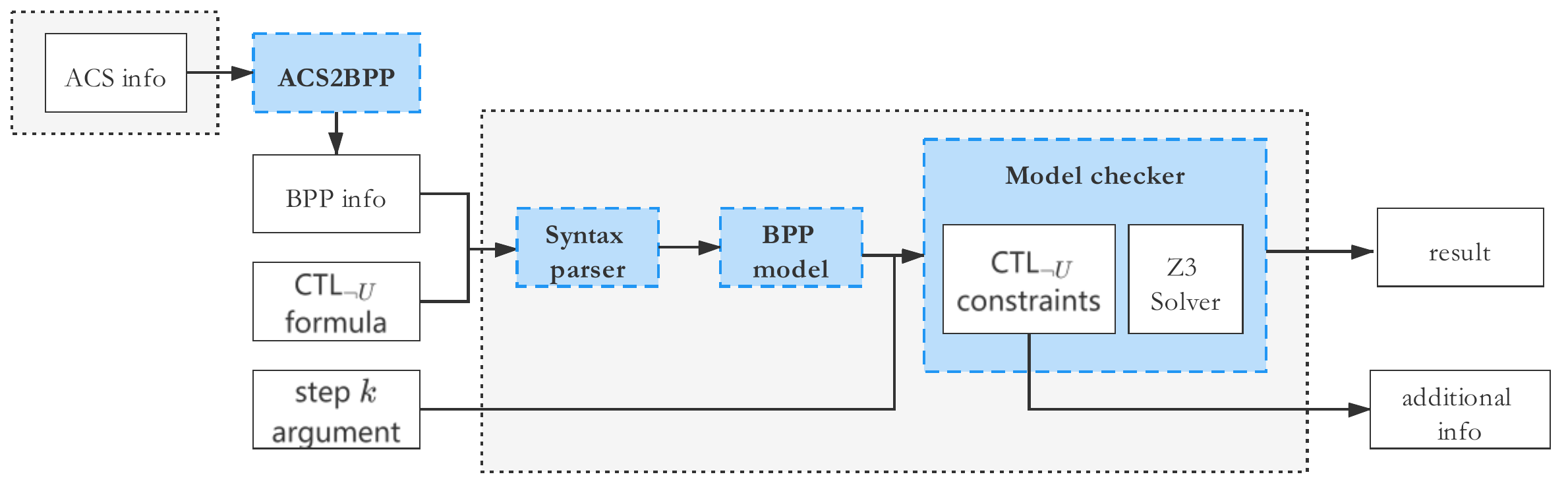}
		\caption{Architecture of BPPChecker}\label{fig:BPPMC}
	\end{figure}
	
	\subsubsection{Input and syntax parser (run.py and parser.py).} The tool first checks whether a given input file conforms to the definition of BPP and CTL grammar. We ask the user to provide information about the BPP model$ (V, \Delta) $ and the initial BPP expression. For bounded \textbf{EG} verification, the step size $k$ should be given through instruction as a parameter. The detailed syntax of input is shown in Figure~\ref{fig:grammar}. If the input is illegal, the tool reports an error and stops execution. Otherwise, the syntax parser will construct the BPP and generate the corresponding constraints to continue. Figure~\ref{input1} and Figure~\ref{input2} show an example of input to BPPChecker.
	\begin{figure}[htb]
		\centering
		\minibox[frame]{
			PROBLEM $\rightarrow$ BPP "formula" FORMULA\\
			BPP $\rightarrow$ "initial" SYMBOLS "rules" RULES \\
			SYMBOLS $\rightarrow$ VAR | SYMBOLS "," SYMBOLS\\
			RULES $\rightarrow$ RULE | RULES RULES \\
			RULE $\rightarrow$ VAR "$\rightarrow$" SYMBOLS | VAR "$\rightarrow$" LABEL "$\rightarrow$" SYMBOLS\\
			\\
			FORMULA $\rightarrow$ UNARY "(" FORMULA ")"\\
			| BINARY "(" FORMULA "," FORMULA ")"\\
			| NEXT "(" LABEL "," FORMULA ")" \\
			\\
			QUERY $\rightarrow$ ACC COMPARE NUMBER \\
			ACC $\rightarrow$ MULT | ACC CONNECT ACC\\
			MULT $\rightarrow$ VAR | VAR "$\star$" NUMBER \\
			\\
			CONNECT $\rightarrow$ [$ +- $] \\
			COMPARE $\rightarrow$ "==" | "!=" | ">=" | "<=" | ">" | "<" \\
			UNARY $\rightarrow$ "Neg" | "EG" | "AF" | "EF" \\
			BINARY $\rightarrow$ "Conj" | "Disj" | "Imp" \\
			NEXT $\rightarrow$ "EX" | "AX"\\
			\\
			VAR $\rightarrow$ [a-zA-Z][a-zA-Z0-9]$^\star$  \\
			LABEL $\rightarrow$ [a-zA-Z][a-zA-Z0-9]$^\star$ \\
			NUMBER $\rightarrow$ [1-9][0-9]$^\star$
		}
		\caption{Grammar defined inputs to BPPChecker}		\label{fig:grammar} 
	\end{figure}
	\subsubsection{ACS2BPP converter (ACS2BPP.cpp).} In the module of ACS2BPP, we implement the algorithm described in Section~\ref{ACSBPP} that converts an Actor Communicating System to BPP. With the support of ACS2BPP, given an ACS model, we can firstly input it to ACS2BPP and then verify safety properties of an ACS through BPPChecker.
	
	\subsubsection{Model checker (bpp.py, blchecker.py, rchecker.py).} The model checker of the tool can automatically verify CTL$_{\lnot U}$ on BPP. The algorithms of constraint construction are described in detail in Section~\ref{lsec} and Section~\ref{reachsec}.
	
	\subsubsection{Output.} The output contains solved result and time. Optionally, users can choose to output set of constraints constructed from statistics provided by Z3 solver. If the input problem can be satisfied, the model is returned (i.e. a set of assignment for each variable). Otherwise, constraints that caused the contradiction are returned. Figure~\ref{output} shows a satisfied model of Figure~\ref{input2}.
	\begin{figure}[htb]
		\centering
		\subfigure[]{\label{input1}
			\boxed{
				\begin{array}{l}
					initial \\
					$X$
					\\
					rules\\
					$X -> a -> Y, Z$\\
					$Y -> a -> X, Y$\\
					$Z -> b -> X$\\
					formula \\
					$\textbf{EG}(\textbf{EX}(a, Y+Z>=2))$
			\end{array}}
		}
		\subfigure[]{\label{input2}
			\boxed{
				\begin{array}{l}
					initial \\
					$S$\\
					rules\\
					$S -> X$\\
					$X -> X, Y$\\
					\\
					formula \\
					$\textbf{EF}(Y == 1)\qquad\qquad$
		\end{array}}}
		\subfigure[]{\label{output}
			\minibox[frame]{
				(x\_S, 0) \\
				(x\_X, 1) \\
				(x\_Y, 1) \\
				(y\_1, 1) \\
				(y\_2, 1) \\
				(z\_S, 0) \\
				(z\_X, 1) \\
				(z\_Y, 2)$\qquad\qquad$
			}
		}
		\caption{Input and output of BPPChecker} 
	\end{figure}
	\section{Evaluation}
	\subsection{Bounded Model Checking of EG-formula}
	In terms of BPP model, we use the BPP shown in Example~\ref{bpp} as the experimental model. Based on this BPP, we perform the model checking against the following \textbf{EG}-formulas:
	\begin{equation}
		\begin{split}	
			&\varphi_1 = \textbf{EG}(\textbf{E}\left \langle a \right \rangle(X_2 + X_3 \ge 2 ))\\
			&\varphi_2 = \textbf{EG}(X_1 + X_2 \ge 2 \rightarrow \textbf{E}\left \langle a \right \rangle(X_1 \ge 2 \wedge X_3 \ge 1 ))\\
			&\varphi_3 = \textbf{EG}(\textbf{AF}(X_1+X_2 \ge 2)	
		\end{split}
	\end{equation}
	
	Bounded model checking is conducted when step $ k $ is respectively set to 5, 10, 20 and 50. Experimental results are shown in Table~\ref{tab1}, in which we show the time(s) spent to verify each case.
	\begin{table}
		\centering
		\caption{Experiment results of bounded model checking on BPP}\label{tab1}
		\begin{tabular}{p{0.15\textwidth}p{0.1\textwidth}p{0.1\textwidth}p{0.1\textwidth}p{0.1\textwidth}}
			\toprule
			& 5 & 10 & 20 & 50\\
			\midrule
			$ \varphi_1 $ & 0.024 & 0.047& 0.072 & 0.253 \\
			$ \varphi_2 $ & 0.033 & 0.058& 0.114 & 0.245 \\
			$ \varphi_3 $ & 0.092 & 0.269& 0.981 & 6.259 \\
			\bottomrule
		\end{tabular}
	\end{table}
	
	From the experimental results, we know that the time spent of BPPChecker increases as step $ k $ increases. The nested depth of formula $ \varphi_2 $ is longer than that of $ \varphi_1 $ and $ \varphi_3$, however, the time of solving $\varphi_2 $ is only slightly longer than that of solving $ \varphi_1$, while the time of solving $ \varphi_3 $ is much longer than that of solving the other two formulas. This situation is especially obvious when the factor($ k=50 $) is large. This is because the number of variables is the main factor that limits the tool's speed. For \textbf{EG} operator, as defined in Algorithm~\ref{algo}, constructing LIA formulas needs $(k+1)\cdot n$ variables, where $n$ is the number of process symbols of BPP. Formula $\varphi_3$ is in fact logically equivalent to formula $ \textbf{EG}(\lnot \textbf{EG} \lnot(X_1 + X_2 \ge 2)) $. So the number of variables in the LIA formula generated by $ \varphi_3 $ is $ (k+1)^2\cdot n^2$. As a result, when $ k $ gets larger, the solver will spend longer time.
	
	For liveness (\textbf{EG}-formulas) properties, existing verifiers based on Petri net are oriented to traditional Petri nets, whose standard benchmarks are no longer applicable to BPP model. In order to evaluate different factors, we generate test cases randomly and run experiments on BPP with different sizes and various step sizes. Since the size of BPP is mainly reflected in the number of transition rules, we model check BPPs with transition rules of 10,20 and 30. For \textbf{EG}-formula, we still choose $ \varphi_1$, $\varphi_2$ and $ \varphi_3 $ above.
	
	\begin{table}
		\centering
		\caption{Experiment result with various numbers of transition rules and step sizes}\label{tab2}
		\begin{tabular}{p{0.2\textwidth}p{0.1\textwidth}p{0.1\textwidth}p{0.1\textwidth}p{0.1\textwidth}}
			\toprule
			\multirow{4}{2cm}{$ k = 5 $} &  & $\varphi_1$ & $\varphi_2$ & $\varphi_3$\\
			\cmidrule{2-5}
			& 10 & 0.330& 0.348 & 1.293 \\
			\cmidrule{2-5}
			& 20 & 1.153 &1.336& 3.693 \\
			\cmidrule{2-5}
			& 30 & 2.097 & 2.376 &7.920 \\
			\midrule
			\multirow{4}{2cm}{$ k = 10 $} &  & $\varphi_1$ & $\varphi_2$ & $\varphi_3$\\
			\cmidrule{2-5}
			& 10 & 0.636& 0.672 &3.785 \\
			\cmidrule{2-5}
			& 20 & 2.231 &2.587& 13.889\\
			\cmidrule{2-5}
			& 30 & 4.262 &4.706& 28.558 \\
			\midrule
			\multirow{4}{2cm}{$ k = 15 $} &  & $\varphi_1$ & $\varphi_2$ & $\varphi_3$\\
			\cmidrule{2-5}
			& 10 & 0.918& 1.050& 8.075\\
			\cmidrule{2-5}
			& 20 & 3.348& 3.809& 26.884 \\
			\cmidrule{2-5}
			& 30 & 7.249 &8.353& 56.810 \\
			\bottomrule
		\end{tabular}
	\end{table}
	
	Table~\ref{tab2} depicts experimental results in seconds(s). The time spent increases as the number of rules and $ k $ increases. The size of rule set affects the size of the disjunctive formula in the transition constraints and path constraints because we need to search the rules in BPP rule set $ \Delta $ to validate transitions and paths. The step size $ k $ mainly determines the size of conjunction formula in the path constraints of LIA formula corresponding to the outermost formula whose temporal operator is \textbf{EG}, i.e. the number of sub-constraints. Last but not least, $ k $ also has an impact on the number of variables introduced.
	
	\subsection{Model Checking on ACS}
	To evaluate the model checking on our BPP-based ACS semantics, we do experiments on ACSs generated from real Erlang asynchronously communicating programs. The benchmark we use are offered by Osualdo's work named Soter~\cite{1,10}, an automatic and efficient ACS-based model checking tool for Erlang. We experimentally compare our BPPChecker with Soter's backend BFC, where BFC verifies reachabillity (i.e. \textbf{EF}-formula) on ACS of Erlang programs and BPPChecker verifies BPPs generated by our $ ACS2BPP $ module. Table~\ref{tab3} shows that BPPChecker takes less time than the BFC in almost all test cases. Moreover, rules constructed by BPP-based semantics of ACS are also less. As a result, BPPChecker performs better in both time and number of rules. 
	
	The above experiments on real Erlang programs also show that BPPChecker has universality in real asynchronous communicating programs verfication: It supports higher-order functions such as Erlang programs, process creation, behaviors of the asynchronous messaging, verfication of error reachability problem and even mutual exclusivity (an important property of asynchronous and concurrent programs). Besides, our approach can be flexibly applied to verify number of messages in a specific mailbox, validating the effectiveness of operation order, security setting, etc.
	
	\begin{table}
		\centering
		\caption{Experiment results of verifying BPP generated by $ ACS2BPP $ }\label{tab3}
		\begin{tabular}{p{0.25\textwidth}p{0.1\textwidth}p{0.1\textwidth}p{0.1\textwidth}p{0.1\textwidth}}
			\toprule
			\multirow{2}[4]{*}{Test Case} & \multicolumn{2}{c}{BFC} & \multicolumn{2}{c}{BPPChecker}\\
			\cmidrule(lr){2-3} \cmidrule(lr){4-5} 
			& Number of Rules & Time(s) & Number of Rules & Time(s) \\
			\midrule
			pipe & 14 & 0.106& 9 & 0.025 \\
			\midrule
			ring & 40& 0.268& 28& 0.106\\
			\midrule
			state\_factory & 34& 0.736& 22& 0.074\\
			\midrule
			reslock& 52& 0.821& 38& 0.228\\
			\midrule
			reslockbeh& 62& 0.856& 46& 0.268\\
			\midrule
			parikh&31 &0.070 &20 &0.097 \\
			\bottomrule
		\end{tabular}
	\end{table}
	
	\section{Related Work}
	Verification of concurrent programs mainly relies on approximation or abstraction to limit the program model. Most verfication of concurrent programs are done on Petri nets or its extentions due to its high complexity~\cite{107,18}. 
	Sen and Viswanathan~\cite{21} proposed a multi-set pushdown system with empty stack restriction. Emmi~\cite{22} proposed an event-driven asynchronous program model, which reduced the coverage of v-PN and concluded that the state reachability of multi-set pushdown system was decidable. Kochems~\cite{24,qadeer2005context} proposed a looser K-shaped limitation to limit the number of stack symbols for receiving operations on the stack of any process and proposed a theoretical model Nets with Nested Colored Tokens (NNCT) based on Petri Nets. Osualdo proposed Actor Communicating System with finite-state processes and implemented Soter, a verifier for Erlang programs. 	 
	Despite the wide use of Petri nets~\cite{petri1966communication}, the time complexity of many branching-time properties on it is very high. Reachability problem on Petri nets has an ACKERMANN upper bound~\cite{25} and a Tower-hard lower bound~\cite{26}. Czerwińsk further improved the lower bound and proved that without restriction on dimension, the problem needs a tower of exponentials of time or space, of height exponential in input size~\cite{czerwinski2021improved}. As for coverability and boundedness, their complexity is EXPSPACE-complete~\cite{9,28}. For NNCT, coverability is TOWER-complete, while boundedness and termination are TOWER-hard~\cite{18}. So existing automatic tools such as BFC~\cite{10}, IIC~\cite{IIC} and Petrinizer~\cite{19} cannot perform well for large-scale programs, and the tools are difficult to be complete and efficient. 
	
	Different from general Petri nets, many properties on BPP have low complexity. Esparza proved the following conclusions~\cite{29,minsky1967computation}: (i) Model checking reachability on Petri nets is undecidable, but for BPP, even if finite-state BPP, the problem is PSPACE-hard; (ii) Given fixed formulas with fixed length, model checking reachability on BPP is $ \sum_d^{p} $ complete, where $ d $ is the nested depth of modal operator $ d $. On the basis of this conclusion, Mayr proved that model checking \textbf{EF} on BPP is actually PSPACE-complete~\cite{121}. 
	
	\section{Conclusion}
	Basic Parallel Process (BPP), as a subclass of Petri nets, can be used to verify concurrent programs with lower complexity. We implement BPPChecker, the first SMT-based model checker for verifying a subclass of CTL (CTL$_\lnot U$) on BPP. For EF-formulas model checking, we reduce it to the satisfiability problem of existential Presburger formula. For EG-formulas model checking, we provide a $k$-step bounded semantics and reduce it to the satisfiability problem of linear integer arithmetic. The linear integer arithmetic formulas are handled by SMT solver Z3. We give Actor Communicating System (ACS) the over-approximation BPP-based semantics and evaluate BPPChecker on ACSs generated from real Erlang programs. Experimental results show that BPPChecker performs more efficiently than the existing tools for a series of property verification problems of real Erlang programs. In the future, we plan to enhance our tool by implementing an Erlang-to-BPP converter and focus on more pracitcal problems like dynamic updating through abstract interpretation to CTL formulas on BPP.
	%
	%
	%
	\newpage
	\bibliographystyle{splncs04}
	\bibliography{mybibliography}
	
\end{document}